\documentclass[aps,superscriptaddress,twocolumn,twoside,floatfix,pra,nofootinbib,a4paper]{revtex4-2}
\usepackage{enumerate,appendix}
\usepackage{amsmath, amsthm, commath}
\usepackage{color,calc,graphicx}
\usepackage[dvipsnames,svgnames,table,cmyk]{xcolor}
\usepackage[colorlinks]{hyperref}
\hypersetup{
	colorlinks = true,
	urlcolor = {blue},
	citecolor = {magenta},
	linkcolor= {blue}
}

\usepackage{graphicx}
\usepackage{latexsym}
\usepackage{bbm}
\usepackage{tikz}
\usepackage{comment}
\usepackage{physics}
\usepackage[charter,cal=cmcal,sfscaled=false]{mathdesign}
\usepackage{booktabs}
\usepackage{multirow}
\usepackage{subfigure}
\usepackage{dcolumn}
\usepackage{diagbox}
\usepackage{array}
\usepackage{makecell}
\usepackage{threeparttable}
\usepackage[normalem]{ulem}
\usepackage{mdframed}
\usepackage{thmtools} 
\usepackage{thm-restate}
\usepackage[capitalize,nameinlink]{cleveref}
\usepackage{etoolbox}

\newtheorem{Theorem}{Theorem}
\newtheorem{corollary}{Corollary}
\newtheorem{lemma}{Lemma}

\newtheorem{definition}{Definition}
\newtheorem{result}{Result}

\theoremstyle{remark}
\newtheorem{remark}{Remark}

\begin{document}
\title{Non-signaling assistance in prepare-and-measure scenarios with classical communication}

\author{José Nogueira}
\email{josenogueira.castro@gmail.com}
\affiliation{Instituto de Física Gleb Wataghin, Universidade Estadual de Campinas, CEP 13083-859 Campinas, Brazil}

\author{Carlos Vieira}
\affiliation{\mbox{Instituto de Matemática, Estatística e Computação Científica, Universidade Estadual de Campinas, CEP 13083-859, Campinas, Brazil}}
\affiliation{Sorbonne Université, CNRS, LIP6, F-75005 Paris, France}    

\author{Lucas E. A. Porto}
\affiliation{Sorbonne Université, CNRS, LIP6, F-75005 Paris, France}

\author{Lucas Pollyceno}
\affiliation{\mbox{International Centre for Theory of Quantum Technologies, University of Gda{\'n}sk, 80-308 Gda\'nsk, Poland}}

\author{Rafael Rabelo}
\affiliation{Instituto de Física Gleb Wataghin, Universidade Estadual de Campinas, CEP 13083-859 Campinas, Brazil}

\author{Otfried Gühne}
\affiliation{Naturwissenschaftlich-Technische Fakult\"at, Universit\"at Siegen, Walter-Flex-Stra{\ss}e 3, 57068 Siegen, Germany}

\begin{abstract}
Extracting the full power of non-local correlations in prepare-and-measure (PM) scenarios requires precise control over the timing and structure of the receiver's measurements. 
Indeed, recent developments in entanglement-assisted classical communication scenarios have shown that adaptive strategies—where the receiver uses the transmitted message to guide their measurement choice—can outperform standard non-adaptive protocols. 
Moving beyond quantum theory, however, the ultimate limits of such advantages remain largely unexplored.
In this work, we thoroughly study adaptive and non-adaptive non-signaling (NS) assistance in PM scenarios with classical communication. 
We provide simple characterizations of the sets of behaviors that can be realized using both non-adaptive and adaptive NS assistance in arbitrary PM scenarios.
As a consequence, we show that non-adaptive NS assistance is already strong enough to reproduce quantum communication with the same message dimension: the transmission of a qudit can be simulated by a classical dit assisted non-adaptively by NS correlations.
We then compare adaptive and non-adaptive NS assistance. We prove that any adaptive NS advantage can be traced back to scenarios in which the receiver has no measurement choice, ruling out the genuinely multi-setting advantages found in entanglement-assisted quantum protocols.
Finally, we identify all PM scenarios where adaptive NS strategies provide a strict advantage over non-adaptive ones.
\end{abstract}

\maketitle

\section{Introduction}
The principle of non-signaling (NS) states that the choice of measurement performed by one observer cannot affect the statistics observed by a distant party~\cite{popescu_quantum_1994}. Although all quantum behaviors in Bell scenarios naturally satisfy this fundamental constraint, there exist valid NS correlations that cannot be realized within quantum theory~\cite{Brunner:2014RMP}. In spite of this fact, the formal study of the set of NS correlations has been proven extremely fruitful, as it provides a framework to investigate the set of quantum correlations from the outside and in a more theory-independent manner.

On the one hand, from a foundational perspective, a better understanding of NS correlations helps to clarify which properties are genuinely quantum and which arise purely from the NS constraints~\cite{masanes_general_2006, barnum_generalized_2007, barnum_teleportation_2008, skrzypczyk_emergence_2009}.
This approach has also motivated a search for additional principles capable of recovering the quantum set within the broader NS set, a research line that has attracted significant attention over the past two decades~\cite{brassard_limit_2006, pawlowski_information_2009, navascues_glance_2010, fritz_local_2013, cabello_simple_2013, nogueira_unexpected_2025}.
On the other hand, from a more applied point of view, there exist device-independent protocols for key distribution and randomness generation that rely solely on the assumption that the existing correlations satisfy the NS constraints \cite{acin_bells_2006, pironio_random_2010}.
Taken together, these developments, both foundational and applied, stem directly from the introduction of NS correlations and the subsequent study of their structural and operational properties.

In this work, we investigate the advantages enabled by the assistance of NS correlations in prepare-and-measure (PM) scenarios \cite{2026arXiv260323604B}.
At their core, these scenarios consist of a preparation device that encodes information into a physical system and transmits it to a measurement device, which then extracts data from the received system. Owing to their structure, PM scenarios can be regarded as fundamental building blocks of communication tasks, and for this reason they find a wide range of applications \cite{gallego2010pam, ahrens_experimental_2012, pawlowski2011semi, woodhead_secrecy_2015, brunner_dimension_2013, li_semi-device-independent_2011, li_semi-device-independent_2012, tavakoli_self-testing_2020, chaves2015information, egelhaaf_certifying_2025, Bowles_Certifying_2014, monogamy_IC, raj2025waveparticle}, which include protocols such as dense coding \cite{bennet1992communication} and random access codes \cite{wiesner1983conjugate,ambainis_dense_1999}.

Although different forms of communication can be considered in PM scenarios, in this work we consider specifically scenarios where the communication is classical.
In this setting, the assistance of a non-signaling box shared between sender and receiver can be conceived under two different classes of protocols, depending on how the classical message is used by the receiver.
More precisely, we can distinguish between an \textit{adaptive} use of the message—where the receiver's input to the NS box depends on the incoming classical message—and a \textit{non-adaptive} use of the message, where the receiver's input is chosen independently of the message, which is only incorporated during a later classical post-processing step. 
Interestingly, it has been recently shown that in entanglement-assisted protocols with classical communication, adaptive strategies are strictly more powerful than non-adaptive ones~\cite{Pauwels_adaptive_2022}.

Despite the growing interest in PM scenarios with different forms of communication and shared resources, the role of arbitrary NS correlations as assistance remains comparatively unexplored. Beyond their non-adaptive use in certain communication scenarios \cite{pawlowski_information_2009, van_dam_implausible_2013}, only a few recent works have begun to explore adaptive NS-assisted protocols \cite{Van_Himbeeck_2019, rout_adaptivePM_2025, pollyceno2026}. 
The goal of the present work is precisely to develop a more systematic study of NS assistance in PM scenarios, and to explore and understand when adaptive strategies outperform non-adaptive ones.

First, we discuss how to characterize the sets of behaviors achievable with non-adaptive and adaptive NS assistance in arbitrary PM scenarios. 
We show that both characterizations can be reduced to smaller PM scenarios, in which the receiver has no inputs. 
In the non-adaptive case, this implies that the constraints defining the achievable set are exactly those defining the classical set of behaviors in this reduced scenario; see \cref{res:NS_equals_C}. 
As a consequence, by combining this reduction with the result of Frenkel and Weiner~\cite{frenkel-storage-2015}, we show that non-adaptive NS assistance can reproduce quantum communication with the same message dimension: the transmission of an $n_A$-dimensional quantum system can be simulated by an $n_A$-valued classical message assisted by arbitrary NS correlations; see \cref{cor:qudit_simulation_NS}. 
Meanwhile, in the adaptive case, the achievable behaviors are characterized by a simple family of inequalities with an information-theoretic interpretation~\cite{Chitambar2023CommunicationValueOfQuantumChannel}; see \cref{res:NS_equals_G}. 
To prove this adaptive characterization, we observe that the problem of deciding whether a behavior can be realized in a PM scenario where the receiver has no inputs can be rephrased as the problem of simulating classical channels with zero-error communication, which has been extensively studied in the literature~\cite{Bennett2001ReverseShannon, Fang2018ChannelSimulation, Wang2016QuantumNSAssisted, Duan2016NSAssisted, Berta2013EntanglementCost,Berta2011QuantumReverseShannonWithOne-Shot,Doolittle_2021_Certifying,Cubitt:2011TIT,Cubitt:2010PRL,Chitambar2023CommunicationValueOfQuantumChannel}. 
In particular, the results of Refs.~\cite{Doolittle_2021_Certifying,Cubitt:2011TIT} are of great value for this work.

We then use these characterizations to compare adaptive and non-adaptive NS assistance. 
This allows us to identify all PM scenarios in which adaptivity provides a strict advantage over non-adaptive protocols; see \cref{res:classification}. 
Moreover, we show that any such advantage can already be witnessed in the reduced scenarios without receiver inputs; see \cref{cor:main}. 
Thus, in contrast with the quantum case, arbitrary NS assistance does not lead to genuinely multi-setting adaptive advantages.

\section{Preliminaries}\label{sec:preliminaries}

\subsection{Prepare-and-measure scenarios}
A prepare-and-measure scenario can be viewed as a correlation scenario capturing the operational structure of general one-way communication tasks \cite{gallego2010pam}. In its standard form, two parties are involved: Alice and Bob.
Before the protocol starts, Alice and Bob may agree on a strategy, or establish more general correlations between them to implement the different steps of the protocol. 
The protocol then starts when Alice receives an input labeled by $x \in [n_{X}] := \{0, 1,\ldots, n_{X} - 1\}$, and Bob an input labeled by $y \in [n_{Y}] := \{0, 1, \ldots, n_{Y} - 1\}$.
Based on $x \in [n_{X}]$ and on the potential pre-established correlations, Alice prepares a message which is sent to Bob. 
Finally, based on his input $y$, the received message and the potential pre-established correlations, Bob produces an output $b \in [n_{B}] := \{0, 1, \ldots, n_{B} - 1\}$. 
The PM scenario is thus specified by the number of Alice's and Bob's possible inputs, and the number of Bob's possible outputs, which we denote by $(n_X; n_Y, n_B)$.
See \cref{fig:PM_scenario} for an illustration.

\begin{figure}[t]
    \centering
     \includegraphics[width=\linewidth]{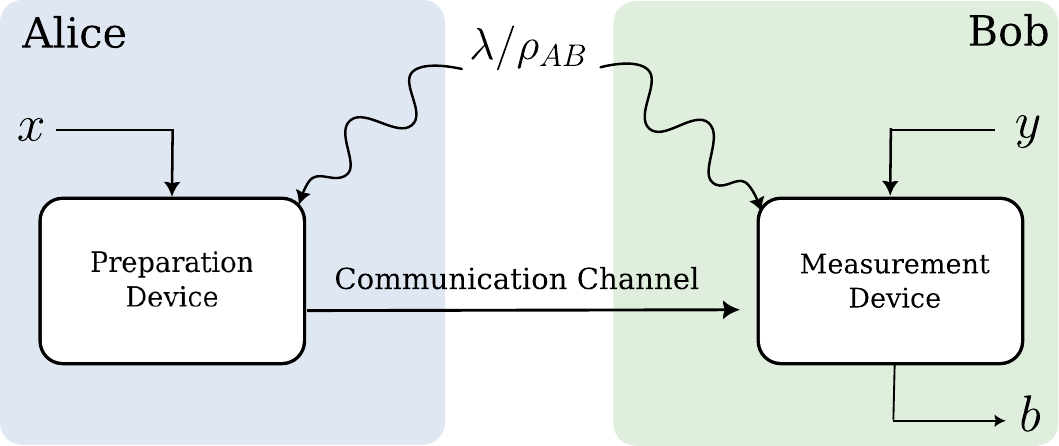}
    \caption{
        Schematic of a prepare-and-measure scenario. Alice encodes her input $x$ into a preparation device, which emits either a classical or quantum system sent to Bob. Bob receives the system, chooses an input $y$ for his measurement device, and produces an output $b$. The devices may share correlations, which can be classical (shared randomness $\lambda$) or quantum (arising from a shared entangled state $\rho_{AB}$).
    }\label{fig:PM_scenario}
\end{figure}

The procedure described above corresponds to a single round of the experiment. After many repetitions, Alice and Bob can estimate the conditional probabilities $\mathbf{q} = [q(b|xy)]_{b, x, y}$ that characterize the observed statistics. We call $\mathbf{q}$ a \textit{behavior} of the PM scenario. It is often useful to geometrically think of a behavior as a vector in $\mathbb{R}^{n_B n_X n_Y}$, with entries $q(b\vert xy)$.
In this sense, the set of behaviors that Alice and Bob can generate with unlimited resources correspond to all $\mathbf{q} = [q(b|xy)]_{b, x, y}$ satisfying $q(b \vert x y) \geq 0$ and $\sum_{b = 0}^{n_B - 1} q(b \vert xy) = 1 ~\forall x \in [n_X], y \in [n_y]$.

A fundamental question in PM scenarios is to characterize the set of behaviors that Alice and Bob can generate using certain kinds of resources. This set of achievable correlations depends on: (i) the physical system used as the message (including its dimensionality); (ii) the nature of the pre-established correlations between the parties; and (iii) the way in which Bob uses the received message to produce his output $b$.

Regarding the first aspect, throughout this work we restrict our analysis to classical communication with a bounded alphabet size. More precisely, the message prepared by Alice is always a classical symbol $a \in [n_A] = \{0, \ldots, n_A-1\}$. Notice that if the dimension of the message $n_A$ is greater than or equal to $n_X$, Alice can always communicate her input to Bob $a = x$, and any correlation $q(b \vert xy)$ can be constructed by the parties.
For this reason, the problem of understanding the correlations that Alice and Bob can generate with classical communication is nontrivial only when $n_A < n_X$~\cite{gallego2010pam}.

As for the second point, the simplest form of pre-shared correlations that may arise are also due to classical systems, what is known as shared randomness. In this case, a behavior $\mathbf{q} = [q(b | xy)]$ can be realized with an $n_A$-dimensional classical message if there exist probability distributions $\{\pi(\lambda)\}_\lambda$, $\{p(a \vert x \lambda)\}_{a, x, \lambda}$ and $\{p(b \vert ya \lambda)\}_{b, y, a, \lambda}$ such that
\begin{equation}\label{eq:PM_classical}
q(b|xy) = \sum_{a=0}^{n_{A}-1} \sum_{\lambda} \pi(\lambda) p(a|x \lambda) p(b|ya \lambda).
\end{equation}
The behaviors that admit such a decomposition are referred to as \textit{classical behaviors}. The set of all classical behaviors that can be realized with the communication of an $n_A$-dimensional classical message in the scenario $(n_X; n_Y, n_B)$ is denoted by $\mathrm{C}(n_{X}; n_{Y}, n_{B})_ {n_{A}}$.
This set corresponds to a polytope in $\mathbb{R}^{n_B n_X n_Y}$, and the problem of deciding whether a given behavior is classical can be cast as a linear program (LP) \cite{gallego2010pam}.

A more sophisticated possibility for the type of correlations shared between Alice and Bob arises when a bipartite entangled state $\rho_{AB}$ is distributed to them. In this case, the most general way in which they can generate behaviors within a PM scenario is as follows:
(a) Alice performs a measurement with $n_{A}$ possible outcomes on her subsystem of $\rho_{AB}$;
(b) she then sends the corresponding classical outcome to Bob; and
(c) upon receiving this message, and conditioned on his input $y$, Bob performs a measurement on his subsystem of $\rho_{AB}$ \cite{Entanglement_Marcin2010, tavakoli2021correlations, vieira_interplays_2023}. 
The resulting class of behaviors can therefore be expressed as:
\begin{equation}\label{eq:Q_PM_A}
q(b|xy) = \sum_{a=0}^{n_{A}-1} \Tr{\rho_{AB} (E_{a|x} \otimes F_{b|y,a})},
\end{equation}
where $E_{a|x}$ and $F_{b|y,a}$ are POVM elements~\cite{nielsen_chuang_2010}.

It is worth noting that Bob’s measurement operator, $F_{b|y,a}$, explicitly depends on his input $y$ as well as on the message $a$ received from Alice. This dependency indicates that Bob must postpone the choice of his measurement until Alice’s message arrives. Hence, his measurement strategy can be regarded as \textit{adaptive}, in the sense that it is conditioned on the information communicated by Alice \cite{adaptive2022pauwels}. Correlations of this form are referred to throughout this work as \textit{entanglement-assisted (EA) adaptive behaviors}.

In contrast, the \textit{non-adaptive} case refers to the situation where Bob’s measurement choice does not depend on Alice’s message. Instead, he chooses his measurement based solely on $y$, performs it on his subsystem to obtain an intermediate outcome $\beta$, and only at a later stage combines $\beta$ and the message $a$ through classical post-processing to produce his final output $b$. The class of behaviors obtained under this strategy is therefore given by:
\begin{equation}\label{eq:Q_PM_NA}
q(b|xy) = \sum_{a=0}^{n_{A}-1} \sum_{\beta} \Tr{\rho_{AB} (E_{a|x} \otimes F_{\beta|y})} p(b|a\beta).
\end{equation}
We refer to the distributions generated by this restricted strategy as \textit{EA non-adaptive behaviors}.

It is straightforward to see that non-adaptive strategies form a subset of adaptive ones. Indeed, \cref{eq:Q_PM_NA} can be recovered from \cref{eq:Q_PM_A} by defining Bob’s measurement operators as
\begin{equation}
    F_{b|y,a} = \sum_{\beta} p(b|a\beta)\, F_{\beta|y}.
\end{equation}
On the other hand, determining whether adaptive strategies are \textit{strictly} more powerful than non-adaptive ones is a less trivial question. While adaptivity yields no advantage in some standard protocols, such as the $2 \to 1$ random access code (RAC)~\cite{buhrman_quantum_2001}, there are specific instances, as discussed by Pauwels et al. \cite{adaptive2022pauwels}, where adaptive strategies strictly outperform non-adaptive ones. Similarly, 
advantages of adaptive strategies over non-adaptive ones have been found in the related tasks of quantum state verification \cite{yu_optimal_2019} and
the implementation of measurements via local operations and classical communication (LOCC) \cite{dutra_structure_2025}.

Finally, Alice’s and Bob’s boxes in the PM scenario may be correlated, or assisted, by general Bell-nonlocal correlations; that is, by resources such as Popescu–Rohrlich (PR) boxes \cite{popescu_quantum_1994} or, more generally, by any Bell-type box satisfying the non-signaling constraints \cite{Brunner:2014RMP}. In this work, our main focus will be precisely on this most general form of assistance, and on the distinctions between adaptive and non-adaptive strategies in this case.

\subsection{Non-adaptive non-signaling assistance}
In previous works on prepare-and-measure scenarios, non-signaling correlations were often used as assisting resources \cite{pawlowski_information_2009, van_dam_implausible_2013, tavakoli_does_2020}. 
Although not phrased in these terms, this assistance was non-adaptive: Bob's use of the shared resource was fixed independently of the message received from Alice.
In its most general form, assuming that Alice can communicate an $n_A$-dimensional classical message to Bob, a behavior generated under non-adaptive NS assistance can be expressed as
\begin{equation}\label{eq:old_non_adaptive}
\begin{split}
        q(b|xy) = \sum_{a = 0}^{n_A - 1}\sum_{\alpha, \beta, \mathcal{X}, \mathcal{Y}} 
        & p(\mathcal{X}|x) p(\mathcal{Y}|y) p(\alpha \beta|\mathcal{XY})  \\
        &\times  p(a|x \mathcal{X}\alpha) p(b|y a \mathcal{Y}\beta),
\end{split}
\end{equation}
where $p(\alpha \beta|\mathcal{X}\mathcal{Y})$ is a NS box shared between the parties; $p(\mathcal{X}|x)$ and $p(\mathcal{Y}|y)$ are local pre-processings by Alice and Bob, respectively, used to choose the inputs of the NS box; $p(a|x\mathcal{X}\alpha)$ represents Alice's post-processing to decide, from her inputs and the output of the assisting NS box, the classical message $a \in [n_A]$ sent to Bob; and $p(b|ya\mathcal{Y}\beta)$ corresponds to Bob's post-processing to decide his final outcome $b$.
Therefore, $\mathbf{q}$ admits a non-adaptive NS model if there exists a NS box and local pre- and post-processings recovering $\mathbf{q}$ through \cref{eq:old_non_adaptive}.

It is interesting to observe that, without loss of generality, the local pre-processings and Alice's post-processing can be absorbed into the assisting NS box.
Moreover, by increasing the number of Bob's outputs of the NS box, one can also make the post-processing of Bob independent of his input $y$.
Additionally, even if additional classical correlations (shared randomness) are given to the parties, they can also be absorbed into the NS box.
In this sense, there are many equivalent notions of non-adaptive NS models, which can be transformed into one another via the absorption of terms into the NS box. For simplicity, in the definition below we absorbing all we can into the NS box.
\begin{definition}\label{def:PM_behaviour_NS_NA} 
A PM behavior $\mathbf{q} = [q(b|xy)]$ can be realized with a classical message of alphabet size $n_A$ and non-adaptive NS assistance if it can be written as
\begin{equation}\label{eq:NA_NS_model_1} 
q(b|xy) = \sum_{a=0}^{n_A-1}\sum_{\beta} p(a\beta|xy)\, p(b|a\beta), 
\end{equation} 
where $p(a\beta|xy)$ is a bipartite NS box, the output $a \in [n_A]$ corresponds to the classical message sent by Alice to Bob, and $p(b|a\beta)$ is Bob's final output post-processing.

We denote by $\mathrm{NS}^{\mathrm{NA}}(n_{X}; n_{Y}, n_{B})_{n_{A}}$ the set of all behaviors of the PM scenario $(n_{X}; n_{Y}, n_{B})$ that admit such a decomposition. 
\end{definition}

Operationally, the process admits the following description. Alice inputs $x$ into her half of the NS box $p(a\beta|xy)$, which produces an output $a$ defining the classical message to be sent to Bob. Bob, in turn, inputs $y$ into his half of the NS box $p(a\beta|xy)$, yielding an output $\beta$. Finally, Bob uses $a$, and $\beta$ to generate the final output $b$ according to $p(b|a\beta)$.
See \cref{fig:non-adaptive_scen_fig} for an illustration.

Notice that non-adaptive EA strategies (\cref{eq:Q_PM_NA}) are a particular form of non-adaptive NS-assisted strategies. Thus, the set of non-adaptive NS-assisted behaviors provides an outer approximation to the set of behaviors achievable with non-adaptive entanglement assistance. As we will see, this outer approximation admits a much simpler characterization than its quantum counterpart.

\begin{figure}[t]
    \centering
    \includegraphics[width=\linewidth]{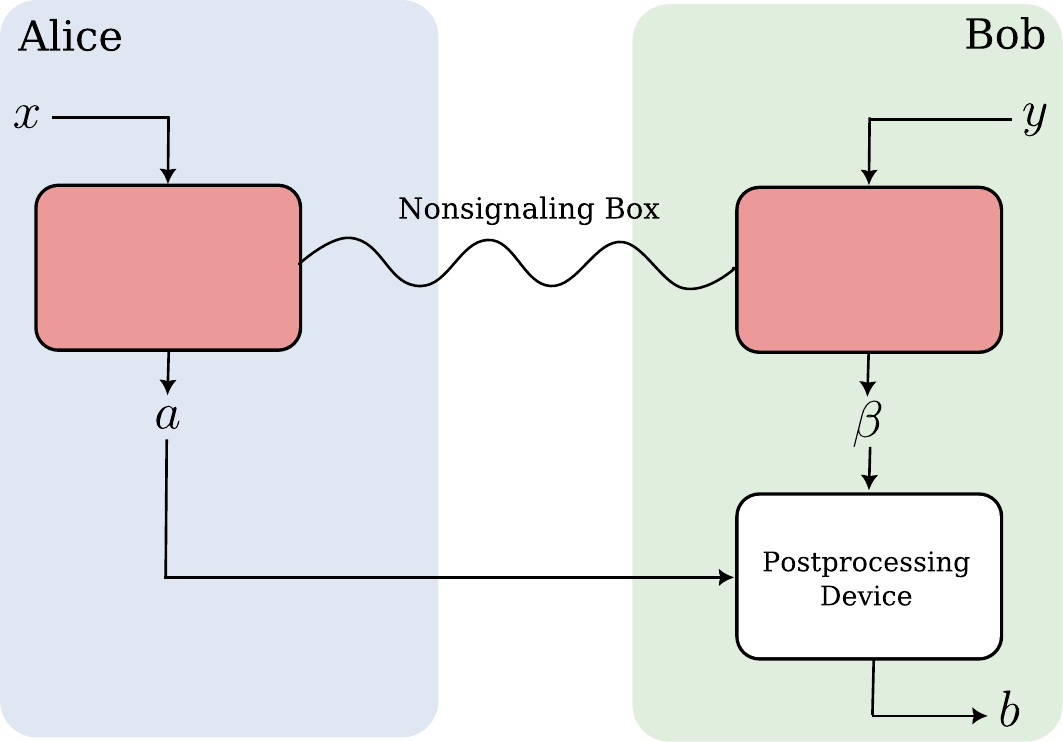}
    \caption{
        Schematic representation of a prepare-and-measure scenario with non-adaptive non-signaling assistance, according to the definition in \cref{eq:NA_NS_model_1}. Alice and Bob share an arbitrary non-signaling box in the Bell scenario $(n_X,n_A;n_Y,n_{\beta})$. Alice inputs $x$ into her share of the box, obtains outcome $a$, and communicates it to Bob. Bob independently inputs $y$ into his share of the box and obtains $\beta$. Finally, he post-processes $a,\beta$ to produce the output $b$. 
    }\label{fig:non-adaptive_scen_fig}
\end{figure}

\subsection{Adaptive non-signaling assistance}
The form of NS assistance discussed in the previous subsection is not, however, the most general one.
Rather than choosing his input to the shared box before receiving Alice's message, Bob may wait for the message to arrive and only then decide which input to use. The behaviors the parties can generate in this way when the dimension of Alice's message is $n_A$ have the form
\begin{equation}\label{eq:old_adaptive}
\begin{split}
        q(b|xy) = \sum_{a = 0}^{n_A - 1}\sum_{\alpha, \beta, \mathcal{X}, \mathcal{Y}} 
        & p(\mathcal{X}|x) p(\mathcal{Y}|ay) p(\alpha \beta|\mathcal{XY})  \\
        &\times  p(a|x \mathcal{X}\alpha) p(b|y a \mathcal{Y}\beta),
\end{split}
\end{equation}
where $p(\alpha \beta|\mathcal{X}\mathcal{Y})$ is a NS box shared between the parties; $p(\mathcal{X}|x)$ and $p(\mathcal{Y}|ay)$ represent local pre-processings; $p(a|x\mathcal{X}\alpha)$ and $p(b|ya\mathcal{Y}\beta)$ correspond to classical post-processings for Alice and Bob, respectively.
The difference with respect to the non-adaptive case \cref{eq:old_non_adaptive} is that now we allow for Bob's pre-processing to depend on the message $a$.

Similarly to the non-adaptive case, additional shared randomness and pre- and post-processing can be absorbed into the assisting NS box, thus also leading to many equivalent models of adaptive NS assistance.
However, unlike the non-adaptive case, in an adaptive model one can also absorb Bob's post-processing into the assisting NS box, thus leading to the following definition.

\begin{definition}\label{def:PM_behaviour_NS_A}
A PM behavior $\mathbf{q} = [q(b|xy)]$ can be realized with a classical message of alphabet size $n_A$ and adaptive NS assistance if it can be written as
\begin{equation}\label{eq:def_q_NS_PM}
   q(b|xy) = \sum_{a=0}^{n_A-1} p(ab|x,(y,z=a)),
\end{equation}
where $p(ab|x,(y,z))$ is a bipartite NS box, the output $a \in [n_A]$ corresponds to the classical message sent by Alice to Bob, and, after receiving $a$, Bob inputs the composite setting $(y,a)$ into his side of the box.

We denote by $\mathrm{NS}^{\mathrm{A}}(n_{X}; n_{Y}, n_{B})_{n_{A}}$ the set of all behaviors of the PM scenario $(n_{X}; n_{Y}, n_{B})$ that admit such a decomposition.
\end{definition}

Operationally, the process unfolds as follows. Alice and Bob share a NS box characterized by conditional probabilities $p(a,b|x,(y, z))$. Alice inputs $x$ into her part of the box and obtains the outcome $a$. This value is then sent to Bob as a classical message. Once Bob receives $a$, he provides the pair $(y, a)$ as the input to his side of the shared box, which then produces the outcome $b$. The corresponding PM behavior $q(b|xy)$ is obtained by averaging over all possible messages $a$, as described in \cref{eq:def_q_NS_PM}.
We note that in this form, the protocol corresponds precisely to the approach introduced by Rout \textit{et al.}~\cite{rout_adaptivePM_2025} to employ NS correlations as a resource in PM scenarios. An illustration is given in \cref{fig:adaptive_scen_fig}.

\begin{figure}[t]
    \centering
    \includegraphics[width=\linewidth]{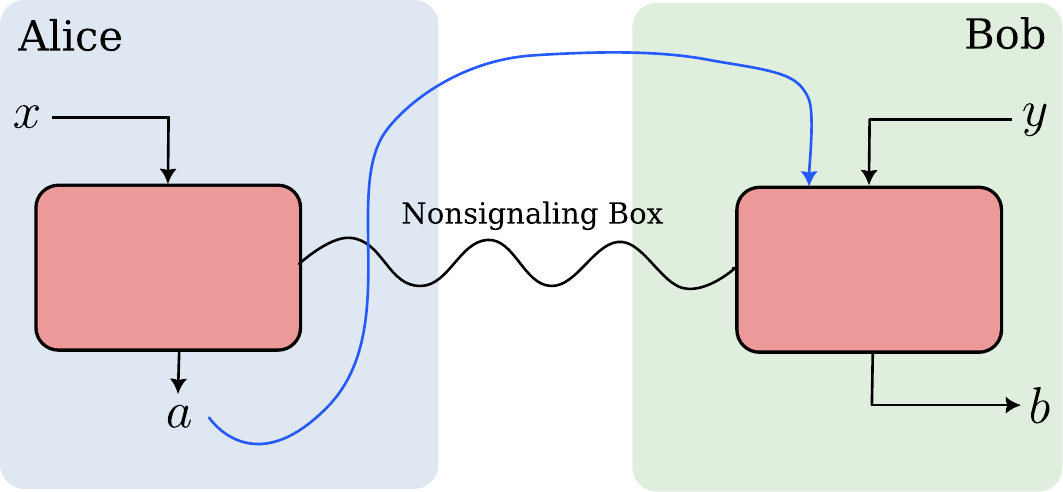}
    \caption{
    Schematic representation of a prepare-and-measure scenario with adaptive non-signaling assistance, according to \cref{eq:def_q_NS_PM}. Alice and Bob share an arbitrary non-signaling box in the Bell scenario $(n_X,n_A;n_Y n_A,n_B)$. Alice inputs $x$ into her share of the box, obtains outcome $a$, and communicates it to Bob. Bob then inputs the composite setting $(y,a)$ into his share of the box and obtains the output $b$. The protocol is adaptive because Bob's input to the shared box depends on the message received from Alice.
    }\label{fig:adaptive_scen_fig}
\end{figure}

This general framework naturally encompasses the previously discussed models. First, any PM behavior admitting an adaptive EA model (\cref{eq:Q_PM_A}) inherently admits an adaptive NS-assisted one, by defining the underlying bipartite box as $p(ab|x,(yz)) = \Tr{\rho_{AB} (E_{a|x} \otimes F_{b|y,z})}$. 
Second, within the NS regime itself, adaptive models are at least as powerful as non-adaptive ones, since Bob can always postpone using Alice's message until a final classical post-processing step. Therefore, $\mathrm{NS}^{\mathrm{NA}}(n_{X}; n_{Y}, n_{B})_{n_{A}} \subseteq \mathrm{NS}^{\mathrm{A}}(n_{X}; n_{Y}, n_{B})_{n_{A}}$, meaning that every PM behavior admitting a non-adaptive NS-assisted model inherently admits an adaptive one.

One of the central objectives of this work is to investigate whether and when this inclusion is strict; that is, whether and when an adaptive advantage of NS assistance over its non-adaptive use exists.
In the following, we discuss characterizations of NS assistance which allow us to completely identify all PM scenarios where adaptive strategies provide a strict advantage over non-adaptive ones.

\section{Results}\label{sec:results}
\subsection{A characterization of the non-adaptive NS-assisted set of behaviors for PM scenarios}\label{sub_sec:A}

In this section, we characterize the set of PM behaviors that can be realized with an $n_A$-dimensional classical message and non-adaptive assistance by NS correlations, $\mathrm{NS}^{\mathrm{NA}}(n_X; n_Y, n_B)_{n_A}$. We show that this characterization can be reduced to the classical set of behaviors $\mathrm{C}(n_X; 1, n_B)_{n_A}$ associated with the simpler PM scenario in which Bob has no inputs. Beyond its structural implications, this reduction reveals a direct connection with the problem of simulating classical channels with zero-error communication, a topic that has been extensively studied in the literature~\cite{Bennett2001ReverseShannon,Fang2018ChannelSimulation, Wang2016QuantumNSAssisted, Duan2016NSAssisted,Berta2013EntanglementCost,Berta2011QuantumReverseShannonWithOne-Shot, Doolittle_2021_Certifying, Cubitt:2011TIT, Cubitt:2010PRL, Chitambar2023CommunicationValueOfQuantumChannel}.

According to \cref{def:PM_behaviour_NS_NA}, in a non-adaptive protocol Bob must perform his measurement independently of Alice’s message. 
Operationally, this is equivalent to Bob producing an entire tuple of potential outputs in advance—one specific response for each possible message $a$ he might later receive. 
This can be captured by introducing a bipartite NS box $p(a; b_{0}\ldots b_{n_{A}-1}|x;y)$, where the output on Bob's side is a string $b_{0}\ldots b_{n_{A}-1}$ of length $n_A$.
Moreover, by twirling the outcomes of Alice, we can take this box to have uniform marginals on Alice without loss of generality.
These facts are formalized in the following lemma, which is proven in Appendix~\ref{appdx:proofPropo}.

\begin{restatable}[]{lemma}{LemmaNA}\label{lemma:NA}
    A PM behavior $\mathbf{q} = [q(b|xy)]$ admits a non-adaptive NS-assisted model if, and only if, there exists a bipartite NS box $p(a; b_{0}\ldots b_{n_{A}-1}|x;y)]$ such that
    \begin{equation}\label{eq:NA_NS_model_2}
        q(b|xy) = \sum_{a =0}^{n_A -1 }\sum_{b_0\ldots b_{n_{A}-1}} \delta_{b_a , b}\, p(a; b_{0}\ldots b_{n_{A}-1}|x;y).   
    \end{equation}
    Furthermore, this underlying NS box can always be chosen to have a uniform marginal for Alice, i.e.,
    \begin{equation}
        \sum_{b_0 ... b_{n_{A}-1}} p(a; b_0 ... b_{n_{A}-1}|x;y) = \frac{1}{n_{A}}.    
    \end{equation}
\end{restatable}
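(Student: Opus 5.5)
We prove the two directions of the equivalence separately and then handle the uniform marginal by a relabelling (twirling) argument. Throughout, we start from the form of \cref{def:PM_behaviour_NS_NA}, $q(b|xy)=\sum_{a,\beta}p(a\beta|xy)\,p(b|a\beta)$.

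\emph{From the model to the string box.} Given a non-adaptive model, Bob can compute all of his possible answers in advance. After obtaining $\beta$ from his side of the box, he samples each $b_{a'}$ from $p(b_{a'}|a'\beta)$ for every $a'\in[n_A]$, independently. Define
\begin{equation}
p(a;b_0\ldots b_{n_A-1}|x;y)=\sum_\beta p(a\beta|xy)\prod_{a'}p(b_{a'}|a'\beta).
\end{equation}
This box is non-signaling: it is obtained from $p(a\beta|xy)$ by a local post-processing on Bob's side that depends only on $\beta$. In particular, Alice's marginal is unchanged, and Bob's marginal is a function of $\sum_a p(a\beta|xy)$, which does not depend on $x$. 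Summing with $\delta_{b_a,b}$ keeps only the factor $p(b|a\beta)$ for $a'=a$, while the other factors marginalize to one. This recovers $q(b|xy)$ and gives \cref{eq:NA_NS_model_2}.

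\emph{From the string box to the model.} Take $\beta=(b_0,\ldots,b_{n_A-1})$ and the deterministic post-processing $p(b|a\beta)=\delta_{b,b_a}$. Then \cref{eq:NA_NS_model_1} reduces exactly to \cref{eq:NA_NS_model_2}.

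\emph{Uniform marginal.} Alice and Bob use shared randomness, which can be absorbed into the NS box, in the form of a uniform shift $s\in[n_A]$. Define
\begin{equation}
\tilde p(a;b_0\ldots b_{n_A-1}|x;y)=\frac{1}{n_A}\sum_{s}p(a-s;\,b_{s}\ldots b_{s+n_A-1}|x;y),
\end{equation}
with all indices taken mod $n_A$. Operationally, Alice outputs $a+s$ and Bob cyclically shifts his string, so the new string $\tilde b$ satisfies $\tilde b_{a'+s}=b_{a'}$.

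Three checks are needed.
\begin{enumerate}
\item The message-indexed output is preserved: $\tilde b_{\tilde a}=b_a$. Hence $\sum\delta_{\tilde b_{\tilde a},b}$ reproduces $q(b|xy)$.
\item Alice's marginal becomes $\frac1{n_A}\sum_s p_A(a-s|x)=1/n_A$.
\item The box remains non-signaling, since it is a convex mixture over $s$ of boxes obtained from NS boxes by local relabellings. Bob's marginal depends only on $y$.
\end{enumerate}

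The index bookkeeping for the cyclic shift is the only delicate step; everything else is routine verification of the NS conditions.
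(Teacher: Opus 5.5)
Your proposal is correct and follows essentially the same route as the paper. It uses the same product construction $\sum_\beta p(a\beta|xy)\prod_{a'}p(b_{a'}|a'\beta)$, the same converse with $\beta=(b_0,\ldots,b_{n_A-1})$ and $p(b|a\beta)=\delta_{b,b_a}$, and the same cyclic twirl; your shift $s$ is the paper's $a\oplus k$ with $k=-s$. You also write out the non-signaling and uniform-marginal checks that the paper leaves implicit.
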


The above lemma provides a useful representation of non-adaptive NS assistance. In particular, unlike \cref{eq:NA_NS_model_1}, where the auxiliary output $\beta$ is not a priori bounded in cardinality, \cref{eq:NA_NS_model_2} yields a finite-dimensional description of the set $\mathrm{NS}^{\mathrm{NA}}(n_{X}; n_{Y}, n_{B})_{n_{A}}$. Consequently, membership in this set can be decided via a linear program.
Additionally, it is interesting to compare \cref{eq:NA_NS_model_2} with the corresponding \cref{eq:def_q_NS_PM} for the adaptive case.
While the latter requires assistance from a non-signaling box with $n_Y n_A$ inputs for Bob, the former requires a non-signaling box where the number of outputs is $n_B n_A$, highlighting the step of the protocol where Bob can use the classical message sent by Alice.

\cref{lemma:NA} also provides the essential ingredient to prove the following theorem, which shows that the problem of characterizing non-adaptive NS assistance in arbitrary scenarios can be reduced to PM scenarios where Bob has no inputs.
\begin{Theorem}~\label{thm:Porto_thm}
    Let $\mathbf{q} = [q(b|xy)]$ be a behavior of the PM scenario $(n_X; n_Y, n_B)$, and let $\mathbf{q}_y$ denote the behaviors of the scenario $(n_X; 1, n_B)$ constructed via $q_{y}(b|x) := q(b|xy)$. 
    Then, $\mathbf{q} \in \mathrm{NS}^{\mathrm{NA}}(n_{X}; n_{Y}, n_{B})_{n_{A}}$ if, and only if, $\mathbf{q}_y \in \mathrm{NS}^{\mathrm{NA}}(n_{X}; 1, n_{B})_{n_{A}}$ for all $y \in [n_y]$.

    In other words, the behavior $\mathbf{q}$ can be realized with the communication of an $n_A$-dimensional classical message and non-adaptive NS assistance if, and only if, each of its sub-behaviors $\mathbf{q}_y$ also can.
\begin{proof}
    If $\mathbf{q} \in \mathrm{NS}^{\mathrm{NA}}(n_{X}; n_{Y}, n_{B})_{n_{A}}$, it naturally follows that $\mathbf{q}_y \in \mathrm{NS}^{\mathrm{NA}}(n_{X}; 1, n_{B})_{n_{A}}$ for each of the sub-behaviors $\mathbf{q}_y$.
    Therefore, to prove the theorem we just need to prove the reverse direction, that is, we need to prove that if each sub-behavior admits a non-adaptive NS-assisted model, then the full behavior $\mathbf{q}$ also admits such a model.
    To do so, notice that if for every input $y$, the sub-behavior $\mathbf{q}_y$ admits a non-adaptive NS-assisted model with an $n_A$-dimensional classical message, by \cref{lemma:NA} for each $y$ there exists an underlying NS box $p_{y}(a; b_0 \dots b_{n_A-1}|x)$ that perfectly reproduces $\mathbf{q}_y$ via \cref{eq:NA_NS_model_2}. Crucially, \cref{lemma:NA} guarantees that each of these boxes can be taken to have perfectly uniform marginals for Alice, i.e.,
    \begin{equation}\label{eq:p_y_has_uniform_marginals_on_Alice}
        \sum_{b_0 \dots b_{n_A-1}} p_{y}(a; b_0 \dots b_{n_A-1}|x) = \frac{1}{n_A} \quad \forall y.
    \end{equation}
    We can then construct a full bipartite box for the entire scenario by defining its components as
    \begin{equation}
        p(a; b_0 \dots b_{n_A-1}|x;y) := p_{y}(a; b_0 \dots b_{n_A-1}|x).
    \end{equation}
    By construction, this joint box perfectly reproduces the full behavior $\mathbf{q}$ via \cref{eq:NA_NS_model_2}, and one can verify that it indeed satisfies the non-signaling constraints (non-signaling from Bob to Alice is guaranteed by \cref{eq:p_y_has_uniform_marginals_on_Alice}). Therefore, since $p(a; b_0 \dots b_{n_A-1}|x;y)$ is a valid non-signaling box that successfully recovers $q(b|xy)$, we conclude that the full behavior $\mathbf{q}$ admits a non-adaptive NS-assisted model.
    \end{proof}
\end{Theorem}

\cref{thm:Porto_thm} reduces the problem of characterizing non-adaptive NS assistance in arbitrary PM scenarios to the study of scenarios in which Bob has no inputs. We therefore turn our attention to the set $\mathrm{NS}^{\mathrm{NA}}(n_{X};1,n_{B})_{n_{A}}$.

At first sight, one may expect non-adaptive NS-assistance to provide no advantage in such scenarios. Indeed, when one of the parties has only a single input, nonlocality cannot arise: every NS box admits a local-hidden-variable model~\cite{Brunner:2014RMP}. As we now show, this intuition is correct, and non-adaptive NS-assisted models coincide exactly with classical models.

According to \cref{def:PM_behaviour_NS_NA}, a behavior $\mathbf{q}=[q(b|x)]$ that admits a non-adaptive NS-assisted model can be written as
\begin{equation}\label{eq:PM_behaviour_Channel2}
\begin{split}
    q(b|x) = \sum_{a = 0}^{n_A - 1}\sum_{\beta} 
    	p(a\beta|x)\, 
        p(b|a\beta).
\end{split}
\end{equation}
Using Bayes’ rule together with the NS constraints, we can rewrite the shared NS box as $p(a\beta|x) = p(\beta)p(a \vert x \beta)$. Hence,
\begin{equation}
    q(b|x)
    =
    \sum_{a=0}^{n_A-1}\sum_{\beta}
    p(\beta)\,
    p(a|x\beta)\,
    p(b|a\beta),
\end{equation}
which is precisely a classical model in the PM scenario (see \cref{eq:PM_classical}), where $\beta$ plays the role of the hidden variable. We have thus proven the following remark.
\begin{remark}\label{remark_NSeqClassical_Yeq1}
In PM scenarios where Bob has no inputs (i.e., where $n_{y}=1$) every behavior that admits a non-adaptive NS-assisted model with $n_A$-dimensional classical communication admits a classical model (\cref{eq:PM_classical}) with the same communication, i.e.,
\begin{equation}
\mathrm{NS}^{\mathrm{NA}}(n_{X};1,n_{B})_{n_{A}} = \mathrm{C}(n_{X};1,n_{B})_{n_{A}}.
\end{equation}
\end{remark}

Combining this fact with \cref{thm:Porto_thm}, we can state the following result:
\begin{result}\label{res:NS_equals_C}
Let $\mathbf{q} = [q(b|xy)]$ be a behavior of the PM scenario $(n_X; n_Y, n_B)$, and let $\mathbf{q}_y$ be the behaviors of the scenario $(n_X; 1, n_B)$ constructed via $q_{y}(b|x) := q(b|xy)$. Then, $\mathbf{q} \in \mathrm{NS}^{\mathrm{NA}}(n_X;n_Y,n_B)_{n_A}$ if and only if $\mathbf{q}_y \in \mathrm{C}(n_X;1,n_B)_{n_A}$ for all $y \in [n_y]$. 
That is,
\begin{equation}\label{eq:non-adaptive_NS_set_alt_def}
\mathrm{NS}^{\mathrm{NA}}(n_X;n_Y,n_B)_{n_A}
= \big\{\mathbf{q} \:\big|\:
\mathbf{q}_{y} \in
\mathrm{C}(n_X;1,n_B)_{n_A}
\:\forall\:\ y\big\}.
\end{equation}

In other words, the behavior $\mathbf{q}$ can be realized with the communication of an $n_A$-dimensional classical message with non-adaptive NS-assistance if, and only if, each of its sub-behaviors $\mathbf{q}_y$ admits a classical model with an $n_A$-dimensional message.
\end{result}
Therefore, the characterization of non-adaptive NS assistance in arbitrary PM scenarios reduces to understanding the classical sets $\mathrm{C}(n_X;1,n_B)_{n_A}$, whose structure depends on the relation between $n_X$, $n_B$, and the message size $n_A$.

We now discuss some consequences of \cref{res:NS_equals_C}. 
The first one concerns quantum communication. 
Since \cref{res:NS_equals_C} reduces non-adaptive NS assistance to fixed-$y$ sub-behaviors, it can be combined with the result of Frenkel and Weiner \cite{frenkel-storage-2015} for PM scenarios with a fixed measurement to show that non-adaptive NS assistance reproduces quantum communication with the same message dimension. Indeed, let $\mathrm{Q}(n_X;n_Y,n_B)_{n_A}$ denote the set of PM behaviors generated by sending an $n_A$-dimensional quantum system from Alice to Bob, that is,
\begin{equation}
    q(b|xy)=\operatorname{Tr}\!\left(M_{b|y}\rho_x\right),
\end{equation}
where $\rho_x$ are states on $\mathbb{C}^{n_A}$ and $\{M_{b|y}\}_b$ is a POVM for each $y$.

\begin{Theorem}\label{thm:quantum_subset_nonadaptive_NS}
For every PM scenario $(n_X;n_Y,n_B)$ and every $n_A$, one has
\begin{equation}
    \mathrm{Q}(n_X;n_Y,n_B)_{n_A}
    \subseteq
    \mathrm{NS}^{\mathrm{NA}}(n_X;n_Y,n_B)_{n_A}.
\end{equation}
\end{Theorem}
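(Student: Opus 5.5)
The plan is to reduce the statement to the single-measurement case using \cref{res:NS_equals_C}, and then settle that case with the Frenkel--Weiner theorem~\cite{frenkel-storage-2015}. Let $\mathbf{q}\in\mathrm{Q}(n_X;n_Y,n_B)_{n_A}$, with $q(b|xy)=\operatorname{Tr}(M_{b|y}\rho_x)$ for states $\rho_x$ on $\mathbb{C}^{n_A}$ and POVMs $\{M_{b|y}\}_b$. By \cref{res:NS_equals_C}, $\mathbf{q}\in\mathrm{NS}^{\mathrm{NA}}(n_X;n_Y,n_B)_{n_A}$ holds if and only if every sub-behavior $\mathbf{q}_y$, given by $q_y(b|x)=\operatorname{Tr}(M_{b|y}\rho_x)$, lies in $\mathrm{C}(n_X;1,n_B)_{n_A}$. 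So it suffices to fix $y$ and show that the behavior obtained from one fixed POVM applied to arbitrary states of a qudit can be reproduced classically with an $n_A$-valued message and shared randomness.

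This fixed-$y$ claim is precisely the content of the Frenkel--Weiner theorem. It states that for any collection of states $\{\rho_x\}$ on $\mathbb{C}^{n_A}$ and any single POVM $\{M_b\}$, the channel $x\mapsto \operatorname{Tr}(M_b\rho_x)$ can be simulated by a classical $n_A$-level message together with shared randomness. Equivalently,
\begin{equation}
q(b|x)=\sum_{\lambda}\pi(\lambda)\sum_{a=0}^{n_A-1}p(a|x\lambda)\,p(b|a\lambda),
\end{equation}
which is \cref{eq:PM_classical} with $n_Y=1$. Applying this to $\{\rho_x\}$ and $\{M_{b|y}\}_b$ gives $\mathbf{q}_y\in\mathrm{C}(n_X;1,n_B)_{n_A}$ for each $y$. \cref{res:NS_equals_C} then assembles these fixed-$y$ classical models into a single non-adaptive NS model for $\mathbf{q}$. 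The gluing that \cref{thm:Porto_thm} performs, via boxes with uniform Alice marginals, is what allows the different per-$y$ shared-randomness models to coexist, even though they are incompatible as a single classical model.

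The only real issue is making sure the cited result matches our setting. Frenkel and Weiner phrase it as the statement that the set of classical channels realizable by an $n_A$-dimensional quantum system with one fixed measurement equals the set realizable with $n_A$ classical states plus shared randomness. We need to check that their model allows arbitrary input-dependent states and an arbitrary final POVM with $n_B$ outcomes. If their statement were restricted, for instance to extremal behaviors or to $n_B\le$ some bound, I would first reduce to extreme points by convexity of $\mathrm{C}(n_X;1,n_B)_{n_A}$ and of the fixed-measurement quantum set. Beyond that the argument is immediate. Note that $n_A\ge n_X$ is trivial in any case, so no case split is needed.
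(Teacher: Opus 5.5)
Your proposal is correct and matches the paper's proof: fix $y$, use the Frenkel--Weiner theorem to get $\mathbf{q}_y\in\mathrm{C}(n_X;1,n_B)_{n_A}$, then invoke \cref{res:NS_equals_C} to assemble a non-adaptive NS model for $\mathbf{q}$. Your concern about the scope of the cited theorem can be dropped, because it covers arbitrary states on $\mathbb{C}^{n_A}$ and an arbitrary single POVM with any number of outcomes, with the classical side taken as a convex hull (i.e., shared randomness allowed), which is exactly what is needed here.
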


\begin{proof}
Let $\mathbf{q}\in \mathrm{Q}(n_X;n_Y,n_B)_{n_A}$. 
For each fixed value of Bob's input $y$, define the single-input sub-behavior
\begin{equation}
    q_y(b|x):=q(b|xy).
\end{equation}
This sub-behavior is generated by sending an $n_A$-dimensional quantum system in a PM scenario where Bob has no inputs. 
By the seminal result of Frenkel and Weiner~\cite{frenkel-storage-2015}, every such quantum behavior admits a classical model with an $n_A$-valued message. 
Therefore,
\begin{equation}
    \mathbf{q}_y\in \mathrm{C}(n_X;1,n_B)_{n_A}
    \quad
    \forall y\in[n_Y].
\end{equation}
The claim follows directly from \cref{res:NS_equals_C}.
\end{proof}

This inclusion has a direct operational interpretation in terms of the simulation of quantum communication. 
In a single use of an $n_A$-dimensional quantum channel, the most general procedure is the following: Alice prepares an arbitrary state $\rho$ on $\mathbb{C}^{n_A}$, sends it to Bob, and Bob performs an arbitrary POVM $\mathsf{M}=\{B_b\}_{b=0}^{n_B-1}$ on the received system. 
The accessible statistics are exactly those given by Born's rule,
\begin{equation}
    p_Q(b|\rho,\mathsf{M})
    =
    \operatorname{Tr}(\rho B_b).
\end{equation}
Thus, simulating the communication of an $n_A$-dimensional quantum system means reproducing these conditional probabilities for arbitrary choices of the state $\rho$ and the measurement $\mathsf{M}$.

Since \cref{thm:quantum_subset_nonadaptive_NS} holds for every PM scenario, it applies to arbitrary finite collections of states and measurements. 
By a standard compactness argument, this finite-scenario simulation implies a simulation for arbitrary choices of $\rho$ and $\mathsf{M}$. 
We can summarize this operational consequence as follows.

\begin{result}\label{cor:qudit_simulation_NS}
The transmission of an $n_A$-dimensional quantum system can be simulated by the transmission of an $n_A$-valued classical message assisted non-adaptively by arbitrary NS correlations.
\end{result}
In this sense, non-adaptive NS assistance collapses qudit communication into classical dit communication of the same alphabet size. 

This should be contrasted with more restricted resources. 
Without NS assistance, even in the presence of shared randomness, the exact classical simulation of quantum communication is a much more demanding problem: the transmission of a qubit is known to be simulable with two classical bits, and this cost is optimal~\cite{toner_communication_2003,renner_classical_2023}. 
For qutrits and higher-dimensional systems, it remains open whether an exact simulation is possible with any finite amount of classical communication~\cite{schlosser_bounding_2026}. 
Similarly, entanglement assistance together with an $n_A$-valued classical message does not, in general, reproduce all PM behaviors generated by sending an $n_A$-dimensional quantum system \cite{tavakoli2021correlations, Entanglement_Marcin2010}. 
Thus, non-adaptive NS assistance provides a particularly strong relaxation of quantum communication.

A second consequence of \cref{res:NS_equals_C} follows from simple regimes in which the reduced classical set $\mathrm{C}(n_X;1,n_B)_{n_A}$ already coincides with the full set of valid behaviors. 
In these cases, the characterization of \cref{res:NS_equals_C} implies that non-adaptive NS assistance trivializes the corresponding PM scenario. Indeed, let us recall first that, whenever $n_A \geq n_X$, the classical PM scenario trivializes: Alice can communicate her input $x$ directly to Bob, and arbitrary behaviors can be generated.  
There is another simple regime in which the reduced classical set $\mathrm{C}(n_X;1,n_B)_{n_A}$ trivializes: when $n_A \geq n_B$. 
In this case, since Bob has no inputs, Alice can compute and send the outcome $b$ that Bob is required to output, enabling the construction of arbitrary behaviors $\mathbf{q}=[q(b|x)]$.
By \cref{res:NS_equals_C}, it follows that in arbitrary PM scenarios $(n_X;n_Y,n_B)$ one can generate all valid behaviors using a classical message of dimension $n_A \geq n_B$ with non-adaptive NS assistance. That is,
\begin{equation}\label{eq:Trivializantion_NS_NA}
\mathrm{NS}^{\mathrm{NA}}(n_X;n_Y,n_B)_{n_A \ge n_B}
= \big\{\mathbf{q} \:\big|\: q(b|xy) \ge 0,\; \sum_{b} q(b|xy) = 1\big\}.
\end{equation}

The content of \cref{eq:Trivializantion_NS_NA} can be interpreted as a trivialization of the PM scenario by non-adaptive non-signaling assistance: whenever $n_A \ge n_B$, any communication task can be achieved with maximal algebraic success probability. In particular, this applies to several prominent families of PM scenarios, including various random access codes—for instance, the seminal $2 \to 1$ RAC \cite{ambainis_dense_1999, Entanglement_Marcin2010}, corresponding to $(4; 2, 2)$ with the communication of a bit, as well as the more general $n \to d$ RAC, corresponding to $(2^n; n, 2)$ with the communication of a dit \cite{Ambainis:2019QIP, de_Vicente_2019}.

Moreover, \cref{eq:Trivializantion_NS_NA} can be viewed as an alternative and independent proof of van Dam’s seminal result on the collapse of communication complexity~\cite{van_dam_implausible_2013}; notably, it relies solely on simple structural properties of the classical and non-adaptive NS sets, rather than the explicit constructions adopted in van Dam's original proof. 
\begin{remark}[Collapse of Communication Complexity~\cite{van_dam_implausible_2013}]
    Consider a PM scenario where Alice and Bob aim to compute a function $f: [n_X] \times [n_Y] \rightarrow [n_B]$ of their respective inputs $x \in [n_X]$ and $y \in [n_Y]$, with minimal communication.
    More specifically, for a given function $f$, the object of interest is the minimal dimension $n_A$ of the classical message sent by Alice to Bob such that they generate the behavior $q(b \vert x y) = \delta_{b, f(x, y)}$.
    
    According to \cref{eq:Trivializantion_NS_NA}, any such function $f$ can be computed with the communication of an $n_B$-dimensional classical message assisted non-adaptively by NS correlations.
    As a particular case, any Boolean function $f: [n_X] \times [n_Y] \rightarrow \{0,1\}$ can be computed with the communication of a single bit.
\end{remark}

Outside the regimes $n_A \geq n_B$ and $n_A \geq n_X$, the classical polytopes $\mathrm{C}(n_X;1,n_B)_{n_A}$ no longer coincide with the full probability set, and additional constraints are needed to characterize them. Nevertheless, these polytopes are closely related to a well-studied problem in information theory: the simulation of classical channels with limited communication.

Indeed, a behavior $\mathbf{q}=[q(b|x)]$ in the scenario $(n_X;1,n_B)$ can be viewed as a classical channel from an input alphabet of size $n_X$ to an output alphabet of size $n_B$, where $q(b|x)$ is the probability that the channel outputs $b \in [n_B]$ upon receiving input $x \in [n_X]$~\cite{Cover2006ElementsOfInfoTheory}. 
In this language, understanding $\mathrm{C}(n_X;1,n_B)_{n_A}$ amounts to asking which classical channels can be simulated using a message of alphabet size $n_A$, together with shared randomness. 
This channel-simulation perspective has been extensively studied in the literature~\cite{Bennett2001ReverseShannon,Fang2018ChannelSimulation, Wang2016QuantumNSAssisted, Duan2016NSAssisted,Berta2013EntanglementCost,Berta2011QuantumReverseShannonWithOne-Shot, Doolittle_2021_Certifying, Cubitt:2011TIT, Cubitt:2010PRL, Chitambar2023CommunicationValueOfQuantumChannel}.

\subsection{A characterization of the adaptive NS-assisted set of behaviors for PM scenarios}\label{sub_sec:B}
%%%%%
We now turn our attention to the set of behaviors that admit an adaptive NS-assisted model.
Unlike the non-adaptive case, whose characterization relies on the classical sets of scenarios where Bob has no inputs (see \cref{res:NS_equals_C}), the adaptive set 
$\mathrm{NS}^{\mathrm{A}}(n_X; n_Y, n_B)_{n_A}$ 
can be described directly by a simple family of inequalities.
These inequalities admit a simple information-theoretic interpretation.

A natural way to understand them is through input-discrimination tasks. 
Let us first consider the canonical case, in which Bob is asked to output the label of Alice's input itself. 
This corresponds to the PM scenario with $n_Y=1$ and $n_B=n_X$, where a successful guess is described by $b=x$. 
To make the task non-trivial, we assume that Alice's message alphabet is smaller than her input alphabet, $n_A < n_X$. 
The performance of a behavior $\mathbf{q}$ in this discrimination task is quantified by the total success score
\begin{equation}
    \mathcal{S}_{\mathrm{suc}}(\mathbf{q}) = \sum_{b=0}^{n_B-1} q(b|x=b).
\end{equation}

With a classical message of alphabet size $n_A$, the total success score is at most $n_A$~\cite{gallego2010pam}. 
Operationally, this bound reflects the fact that the message can perfectly identify at most $n_A$ distinct inputs of Alice. 
The bound is tight: Alice can choose $n_A$ inputs, say $x \in [n_A]$, and transmit them perfectly by sending $a=x$. 
For all remaining inputs $x \geq n_A$, she sends a fixed default message, say $a=0$. 
If Bob outputs the received message, $b=a$, the resulting behavior satisfies $q(b|x)=\delta_{b,x}$ for $x<n_A$ and $q(b|x)=\delta_{b,0}$ for $x\geq n_A$, giving
\begin{equation}
    \mathcal{S}_{\mathrm{suc}}(\mathbf{q})  = \sum_{b=0}^{n_A-1}1 + \sum_{b=n_A}^{n_B-1}0  =  n_A.
\end{equation}

It is then natural to ask whether adaptive NS assistance can improve the performance of this input-discrimination task. 
The answer is negative: even with adaptive NS assistance, the total success score is still bounded by $n_A$. 
Indeed, if the behavior $\mathbf{q}=[q(b|x)]$ can be realized with a $n_A$-dimensional classical message and adaptive NS assistance, then
\begin{equation}\label{eq:GuessInputTask}
    \begin{aligned}
        \sum_b q(b|x=b)
        &= \sum_b \sum_a p(a,b|x=b,z=a) \\
        &\leq \sum_b \sum_a \sum_{\tilde{a}} p(\tilde{a},b|x=b,z=a) \\
        &= \sum_a \sum_b p_B(b|z=a) \\
        &= \sum_a 1 = n_A,
    \end{aligned}
\end{equation}
where $p_B$ denotes Bob's marginal distribution. 
Thus, the same bottleneck imposed by the size of the classical message remains in place for adaptive NS assistance.

This bound can also be viewed as a structural constraint on the adaptive NS-assisted set 
$\mathrm{NS}^{\mathrm{A}}(n_X;1,n_X)_{n_A}$. 
For instance, the behavior corresponding to perfect input discrimination, $q(b|x)=\delta_{b,x}$, has success score 
$\mathcal{S}_{\mathrm{suc}}(\mathbf{q})=n_X$. 
Since we are assuming $n_X>n_A$, this behavior violates the bound in \cref{eq:GuessInputTask}. 
Hence, it does not belong to 
$\mathrm{NS}^{\mathrm{A}}(n_X;1,n_X)_{n_A}$.

The previous task corresponds to the canonical input-discrimination setting in which Bob is asked to output Alice's input: for each input $x$, the only correct output is $b=x$. 
More generally, one may consider discrimination tasks in which Bob is only required to distinguish a selected family of Alice's inputs, and where the possible answers need not be labeled by the inputs themselves. 
Such a discrimination problem can be encoded by a function $f:[n_B]\to[n_X]$, which specifies which input of Alice is identified by each output of Bob: the output $b$ is declared successful precisely when Alice's input is $x=f(b)$. 
Equivalently, for each $x$ in the image of $f$, the accepted outputs are those in $f^{-1}(x)$. 
The corresponding score is
\begin{equation}
    \sum_b q(b|x=f(b)).
\end{equation}
Thus, although success is no longer defined by the canonical rule $b=x$, the score still measures Bob's ability to discriminate the inputs in the image of $f$, allowing several outputs to certify the same input.

The same argument used in \cref{eq:GuessInputTask} applies to any such discrimination task. 
Indeed, if $\mathbf{q}=[q(b|x)]$ can be realized with a $n_A$-dimensional classical message and adaptive NS assistance, then for every function $f:[n_B]\to[n_X]$ one has
\begin{equation}\label{eq:f_inequalities_no_y}
    \sum_b q(b|f(b)) \leq n_A .
\end{equation}
This family of inequalities admits an equivalent compact form. 
Indeed, imposing \cref{eq:f_inequalities_no_y} for all functions $f:[n_B]\to[n_X]$ is equivalent to imposing
\begin{equation}\label{eq:max_constraint_no_y}
    \sum_{b=0}^{n_B-1}\max_{x\in[n_X]} q(b|x) \leq n_A .
\end{equation}
In other words, the left-hand side of \cref{eq:max_constraint_no_y} is the largest input-discrimination score that can be extracted from the behavior $\mathbf{q}$.

Interestingly, the quantity on the left-hand side of \cref{eq:max_constraint_no_y}, has appeared in several contexts in the literature.
In channel simulation, it appears as a maximum-likelihood-type quantity~\cite{Doolittle_2021_Certifying} and as the communication value of a classical channel~\cite{Chitambar2023CommunicationValueOfQuantumChannel}. 
In quantitative information flow, it is closely related to the multiplicative Bayes capacity~\cite{Alvim2020BookQuantitativeInfoFlow, Alvim2012MeasuringInfoLeakage, Braun2009QuantitativeNotions}. 
It also appears naturally in classical state-discrimination problems~\cite{Kvashchuk2026MostDiscriminableStates}. 
In all these settings, it quantifies, in one form or another, how well one can infer Alice's input from Bob's output.

In an arbitrary PM scenario $(n_X;n_Y,n_B)$, the same reasoning can be applied to each fixed value of Bob's input. 
Hence, every adaptive NS-assisted behavior $\mathbf{q}=[q(b|xy)]$ must satisfy
\begin{equation}\label{eq:f_inequalities}
    \sum_b q(b|f(b),y) \leq n_A
\end{equation}
for every $y\in[n_Y]$ and every function $f:[n_B]\to[n_X]$, or equivalently,
\begin{equation}\label{eq:max_constraint_adaptive}
    \sum_{b=0}^{n_B-1}\max_{x\in[n_X]} q(b|x,y) \leq n_A
    \quad
    \forall y\in[n_Y].
\end{equation}

It is worth commenting that the same family of constraints has appeared previously in the study of theory-independent models with informational restrictions~\cite{tavakoli_informationally_2020, chaturvedi_quantum_2020, tavakoli_informationally_2022, pauwels_information_2025, pandit_limits_2026}. 
In that context, one considers the set of all valid PM behaviors satisfying positivity, normalization, and the information constraints in \cref{eq:max_constraint_adaptive}. 
These models have been used as theory-independent outer approximations to PM behaviors generated either with quantum communication or with entanglement-assisted classical communication, under the same message-size restriction $n_A$. 
We denote this set by $\mathcal{G}(n_X;n_Y,n_B)_{n_A}$.

The derivation above shows that every adaptive NS-assisted behavior belongs to this set, namely
\begin{equation}
    \mathrm{NS}^{\mathrm{A}}(n_X;n_Y,n_B)_{n_A}
    \subseteq
    \mathcal{G}(n_X;n_Y,n_B)_{n_A}.
\end{equation}
Our main result in this subsection shows that this inclusion is tight: the input-discrimination constraints are not only necessary for adaptive NS assistance, but also sufficient.

\begin{result}\label{res:NS_equals_G}
For any PM scenario $(n_X;n_Y,n_B)$, a behavior $\mathbf{q}=[q(b|xy)]$ can be realized with an $n_A$-dimensional classical message and adaptive NS assistance if and only if it satisfies
\begin{equation}\label{eq:max_teo}
    \sum_{b=0}^{n_B-1}\max_{x\in[n_X]} q(b|x,y) \leq n_A
    \quad
    \forall y\in[n_Y].
\end{equation}

In other words, the set of adaptive NS-assisted behaviours is equal to the set of behaviours only restricted by input-discrimination constraints
\begin{equation}
    \mathrm{NS}^{\mathrm{A}}(n_X;n_Y,n_B)_{n_A}
    =
    \mathcal{G}(n_X;n_Y,n_B)_{n_A}.
\end{equation}
\end{result}

The proof of \cref{res:NS_equals_G} has two ingredients. 
The first is an adaptive analogue of \cref{thm:Porto_thm}: adaptive NS assistance in an arbitrary PM scenario can be reduced to adaptive NS assistance in the corresponding scenarios where Bob has no inputs. 

\begin{restatable}[]{Theorem}{thmadaptivelocalglobal}\label{global_local_adaptive}
Let $\mathbf{q} = [q(b|xy)]$ be a behavior of the PM scenario $(n_X; n_Y, n_B)$, and let $\mathbf{q}_y$ denote the behaviors of the scenario $(n_X; 1, n_B)$ constructed via $q_{y}(b|x) := q(b|xy)$. 
    Then, $\mathbf{q} \in \mathrm{NS}^{\mathrm{A}}(n_{X}; n_{Y}, n_{B})_{n_{A}}$ if, and only if, $\mathbf{q}_y \in \mathrm{NS}^{\mathrm{A}}(n_{X}; 1, n_{B})_{n_{A}}$ for all $y \in [n_y]$.

    In words, a behavior admits an adaptive NS-assisted model in the full PM scenario if and only if each of all its fixed-$y$ sub-behaviors also admits an adaptive NS-assisted model.
\end{restatable}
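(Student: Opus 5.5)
The forward direction is immediate: if $\mathbf{q}$ is realized by a NS box $p(ab|x,(y,z))$ via \cref{eq:def_q_NS_PM}, then for each fixed $y_0$ the box $p_{y_0}(ab|x,z):=p(ab|x,(y_0,z))$ is a NS box in the Bell scenario $(n_X,n_A;n_A,n_B)$, and it reproduces $\mathbf{q}_{y_0}$. So $\mathbf{q}_{y_0}\in\mathrm{NS}^{\mathrm{A}}(n_X;1,n_B)_{n_A}$. The content of the theorem is the converse, and I will follow the strategy of \cref{thm:Porto_thm}: glue the single-$y$ boxes into one box indexed by Bob's composite input $(y,z)$.

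Suppose that for every $y$ there is a NS box $p_y(ab|x,z)$ with $q(b|xy)=\sum_a p_y(ab|x,z=a)$. Define $p(ab|x,(y,z)):=p_y(ab|x,z)$. This box reproduces $\mathbf{q}$ and is non-signaling from Alice to Bob, since each $p_y$ is. Non-signaling from Bob to Alice requires that Alice's marginal $p_y(a|x)$ be independent of $y$. This does not hold automatically, so the key step is to show that each $p_y$ can be replaced by a box with uniform Alice marginal $p(a|x)=1/n_A$, while still reproducing $\mathbf{q}_y$. This is the adaptive analogue of the uniform-marginal part of \cref{lemma:NA}, and it is the main obstacle.

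To prove this step, I will twirl Alice's outputs by a cyclic shift that Bob undoes through his adaptive input. Let $s$ be shared uniform randomness on $\mathbb{Z}_{n_A}$ (shared randomness can be absorbed into the NS box). Alice uses $p_y$, obtains $a$, and outputs $a'=a+s \bmod n_A$. Bob, given composite input $z'$, feeds $z=z'-s$ into $p_y$ and outputs its $b$. Formally,
\begin{equation}
\tilde p_y(a'b|x,z'):=\frac{1}{n_A}\sum_{s}p_y(a'-s,\,b\,|\,x,\,z'-s).
\end{equation}
I then need to check three properties.

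\textbf{Valid NS box.} $\tilde p_y$ is a convex mixture of boxes obtained from $p_y$ by local relabelings of Alice's output and Bob's input, so it is NS.

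\textbf{Uniform Alice marginal.} We have $\tilde p_y(a'|x)=\frac1{n_A}\sum_s p_y(a'-s|x)=1/n_A$. The first equality uses the NS property of $p_y$, so that Alice's marginal does not depend on $z'-s$.

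\textbf{Reproduces $\mathbf{q}_y$.} Summing $\tilde p_y(a'b|x,z'=a')$ over $a'$ gives $\frac1{n_A}\sum_s\sum_{a'}p_y(a'-s,b|x,a'-s)=q(b|xy)$.

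With uniform marginals in hand, the glued box $p(ab|x,(y,z)):=\tilde p_y(ab|x,z)$ has Alice marginal $1/n_A$ for every $(y,z)$, so it is NS in both directions. It reproduces $\mathbf{q}$ via \cref{eq:def_q_NS_PM}, which completes the proof.
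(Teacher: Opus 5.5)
Your proposal is correct and follows the paper's proof: you restrict the box to a fixed $y$ for the forward direction, and for the converse you glue the fixed-$y$ boxes along $y$. Before gluing, you make Alice's marginal uniform with the cyclic twirl $\frac{1}{n_A}\sum_s p_y(a-s,b\,|\,x,z-s)$, which is exactly the construction of \cref{lemma:uniform_adaptive} with the shift relabeled $i=-s$.
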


The proof of \cref{global_local_adaptive} follows the same general strategy as the proof of \cref{thm:Porto_thm}, and is deferred to Appendix~\ref{appdx:proof_global_local_adaptive}. 
A central ingredient is \cref{lemma:uniform_adaptive}, which shows that whenever a behavior admits an adaptive NS-assisted model, one can construct an equivalent adaptive model in which Alice's marginal distribution in the assisting NS box is uniform. 
This is the adaptive analogue of the uniform-marginal statement in \cref{lemma:NA}.

The second ingredient is a known result on the simulation of classical channels with NS assistance. 
In light of \cref{global_local_adaptive}, it is enough to characterize adaptive NS assistance in PM scenarios where Bob has no inputs. 
As discussed at the end of the previous subsection, such scenarios are directly related to the problem of simulating classical channels with limited communication. 
For this problem, NS-assisted simulation was studied in \cite{Cubitt:2011TIT}, where a simple characterization was obtained. 
Translated into our notation, \cite[Theorem~16]{Cubitt:2011TIT} reads as follows.

\begin{restatable}[Theorem~16 of~\cite{Cubitt:2011TIT}]{Theorem}{thmCubitt}\label{thm:adaptive_NS_no_inputs}
A behavior $\mathbf{q}=[q(b|x)]$ of a PM scenario $(n_X;1,n_B)$ belongs to 
$\mathrm{NS}^{\mathrm{A}}(n_X;1,n_B)_{n_A}$ if and only if
\begin{equation}\label{eq:n_A_bound_on_cv}
    \sum_{b=0}^{n_B-1}\max_{x\in[n_X]} q(b|x) \leq n_A.
\end{equation}
\end{restatable}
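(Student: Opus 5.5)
Necessity is already settled by \cref{eq:GuessInputTask}: for any $f:[n_B]\to[n_X]$ we bound $q(b|f(b))=\sum_a p(ab|f(b),z=a)\le\sum_a p_B(b|z=a)$, sum over $b$, and use non-signaling of Bob's marginal. Taking $f(b)\in\arg\max_x q(b|x)$ gives \cref{eq:n_A_bound_on_cv}. The real work is sufficiency. Given $\mathbf{q}$ with $S:=\sum_b\max_x q(b|x)\le n_A$, we must exhibit an explicit NS box $p(ab|x,z)$ with $a,z\in[n_A]$ such that $\sum_a p(ab|x,z=a)=q(b|x)$.

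The plan is to build the box directly. Write $m_b:=\max_x q(b|x)$ and let Bob's marginal, independent of $z$, be proportional to $m_b$; Alice's marginal will be uniform, $1/n_A$. Set
\begin{equation*}
p(ab|x,z)=\begin{cases}\dfrac{q(b|x)}{n_A}, & a=z,\\[4pt] \dfrac{r_b - q(b|x)/n_A}{n_A-1}, & a\neq z,\end{cases}
\end{equation*}
where $r_b$ is the target Bob-marginal weight for output $b$ at fixed $z$. Then $\sum_a p(ab|x,z=a)=n_A\cdot q(b|x)/n_A=q(b|x)$, as required. Bob's marginal given $z$ is $q(b|x)/n_A+(n_A-1)\cdot(r_b-q(b|x)/n_A)/(n_A-1)=r_b$, which is independent of $x$, so there is no signaling from Alice to Bob. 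Alice's marginal given $x,z$ is $1/n_A$ for $a=z$; for $a\neq z$ it is $\sum_b(r_b-q(b|x)/n_A)/(n_A-1)=(1-1/n_A)/(n_A-1)=1/n_A$ provided $\sum_b r_b=1$. Hence there is no signaling from Bob to Alice either.

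The only remaining requirement, which is the crux, is positivity: we need $r_b\ge q(b|x)/n_A$ for all $x$, i.e.\ $r_b\ge m_b/n_A$, together with $\sum_b r_b=1$. This is feasible exactly when $\sum_b m_b/n_A\le 1$, which is the hypothesis. Concretely, take $r_b=m_b/n_A+(1-S/n_A)/n_B$. The case $n_A=1$ needs separate handling, since the $a\neq z$ branch is absent. There the hypothesis gives $S\le 1$, while $\sum_b q(b|x)=1$ forces $q(b|x)=m_b$ for every $x$, so $q$ is independent of $x$ and trivially realizable. I expect the main obstacle to be nothing more than this bookkeeping. In the paper's spirit, though, one could alternatively cite the LP duality argument of \cite{Cubitt:2011TIT}; I would present the explicit box above since it is short and self-contained.
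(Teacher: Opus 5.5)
Your proof is correct. The necessity half is exactly the paper's argument in \cref{eq:GuessInputTask}, specialized to $f(b)\in\arg\max_x q(b|x)$.

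For sufficiency the routes differ. The paper gives no argument of its own and simply imports Theorem~16 of \cite{Cubitt:2011TIT}, whereas you exhibit an explicit box. Your box depends on $(a,z)$ only through $\delta_{a,z}$, which is the form any adaptive model takes after averaging over joint relabelings of $a$ and $z$. Your positivity analysis then reduces sufficiency to the existence of a distribution $r_b\ge m_b/n_A$, which holds exactly when $S\le n_A$. There is a small wording slip: you announce Bob's marginal as proportional to $m_b$ but then choose $r_b=m_b/n_A+(1-S/n_A)/n_B$. The literally proportional choice $r_b=m_b/S$ also works, since $S\ge \sum_b q(b|x)=1$.

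The citation places the statement in the zero-error channel-simulation literature at no cost. Your route gives a self-contained, line-by-line checkable protocol, and it has a further payoff within this paper. In your box, Alice's marginal equals $1/n_A$ for every $x$ and $z$, not just on average. Building one such box for each $y$ therefore gives boxes that glue directly as in Appendix~\ref{appdx:proof_global_local_adaptive}. This proves \cref{res:NS_equals_G} without needing \cref{lemma:uniform_adaptive}, and it yields an explicit adaptive protocol for every behavior satisfying \cref{eq:max_teo}.
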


We can now conclude the proof of \cref{res:NS_equals_G}. 
By \cref{global_local_adaptive}, a behavior $\mathbf{q}=[q(b|xy)]$ belongs to 
$\mathrm{NS}^{\mathrm{A}}(n_X;n_Y,n_B)_{n_A}$ if and only if, for every $y\in[n_Y]$, the reduced behavior $\mathbf{q}_y$ belongs to 
$\mathrm{NS}^{\mathrm{A}}(n_X;1,n_B)_{n_A}$. 
Using \cref{thm:adaptive_NS_no_inputs}, $\mathbf{q}_y$ belongs to 
$\mathrm{NS}^{\mathrm{A}}(n_X;1,n_B)_{n_A}$ if and only if,
\begin{equation}
    \sum_{b=0}^{n_B-1}\max_{x\in[n_X]} q_y(b|x) \leq n_A
    \quad
    \forall y\in[n_Y].
\end{equation}
As $q_y(b|x) := q(b|xy)$, this proves \cref{res:NS_equals_G}.

We close this subsection by pointing out some direct consequences of \cref{res:NS_equals_G}. First, the characterization gives a simple membership test for adaptive NS-assisted behaviors. 
Given a valid PM behavior $\mathbf{q}=[q(b|xy)]$, deciding whether $\mathbf{q}$ belongs to $\mathrm{NS}^{\mathrm{A}}(n_X;n_Y,n_B)_{n_A}$ only requires checking the inequalities of \cref{eq:max_teo}.
Thus, membership can be decided by a direct evaluation of the input-discrimination score associated with each fixed value of Bob's input.

Second, the same result gives an explicit half-space representation of the adaptive NS-assisted set, and therefore an alternative linear-programming formulation for optimizing linear PM functionals over it.
Appendix~\ref{appdx:hyperplane_cci} discusses a succinct representation of these constraints.
Moreover, this half-space description makes it possible to apply standard polyhedral tools, such as PANDA~\cite{panda}, to obtain an equivalent description in terms of extremal points.
In this sense, one can identify the extremal adaptive NS-assisted PM behaviors.

\subsection{Adaptive advantage of NS assistance in PM scenarios}
At this point, we have characterized both the non-adaptive and adaptive NS-assisted sets of behaviors in arbitrary PM scenarios; see \cref{res:NS_equals_C,res:NS_equals_G}. 
We now compare these sets in order to determine when adaptivity provides a strict advantage. Notably, we identify all PM scenarios for which an adaptive advantage of NS assistance exists over its non-adaptive use.
As suggested by the results of the previous sections, adaptive advantages for NS assistance can be understood entirely in terms of fixed-$y$ sub-behaviors, or equivalently in terms of the corresponding PM scenarios where Bob has no inputs.
We therefore begin with the single-input case.

In PM scenarios where Bob has no inputs, non-adaptive NS assistance coincides with the classical set; see \cref{remark_NSeqClassical_Yeq1}. 
Therefore, any non-classical behavior in such a scenario that can be generated with adaptive NS assistance already witnesses an adaptive advantage over non-adaptive NS assistance. 
This situation occurs, for example, in scenarios studied in the context of entanglement-assisted communication~\cite{Frenkel2022entanglement, vieira_interplays_2023, manna_entanglement_2026}. 
Together with known results on PM inequalities and communication-constrained sets~\cite{vieira_interplays_2023,Doolittle_2021_Certifying}, our characterization of adaptive NS assistance leads to a complete classification of when post-classical adaptive behaviors exist in the single-input case.

\begin{Theorem}\label{thm:Y1_classification}
For $n_A\geq 2$, the adaptive NS-assisted set 
$\mathrm{NS}^{\mathrm{A}}(n_X;1,n_B)_{n_A}$ contains non-classical behaviors if and only if
\begin{equation}
    n_X \geq n_A+1
    \quad\text{and}\quad
    n_B \geq n_A+2.
\end{equation}
Equivalently, these are exactly the PM scenarios with $n_Y=1$ in which adaptive NS assistance provides an advantage over its non-adaptive use.
\end{Theorem}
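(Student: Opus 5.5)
By \cref{thm:adaptive_NS_no_inputs}, $\mathrm{NS}^{\mathrm{A}}(n_X;1,n_B)_{n_A}$ is exactly the set of behaviors with $\sum_b \max_x q(b|x) \leq n_A$. By \cref{remark_NSeqClassical_Yeq1}, the non-adaptive set equals $\mathrm{C}(n_X;1,n_B)_{n_A}$. The theorem therefore reduces to deciding when the polytope $\mathcal{G}(n_X;1,n_B)_{n_A}$ strictly contains $\mathrm{C}(n_X;1,n_B)_{n_A}$. We always have $\mathrm{C}\subseteq\mathcal{G}$, so the task is to show equality in the complementary regimes and exhibit a separating behavior otherwise.

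\emph{Trivial regimes.} If $n_X\le n_A$ or $n_B\le n_A$, then, as observed before \cref{eq:Trivializantion_NS_NA}, the classical set already contains every valid behavior. There is then nothing adaptive to gain.

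\emph{The case $n_B=n_A+1$.} Here I would show that $\mathrm{C}(n_X;1,n_A+1)_{n_A}$ is characterized by the single constraint $\sum_b\max_x q(b|x)\le n_A$. I would invoke the known facet classification of these channel-simulation polytopes: for $n_B=n_A+1$, the only nontrivial facets are the input-discrimination inequalities $\sum_b q(b|f(b))\le n_A$ \cite{Doolittle_2021_Certifying,vieira_interplays_2023}.

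If I had to argue this directly, I would use an extreme-point and convexity argument. Given $q$ with $\sum_b m_b\le n_A$, where $m_b=\max_x q(b|x)$, note that $\sum_b m_b \ge 1$. The slack is $s=n_A-\sum_b m_b\ge0$. A classical strategy must, for each shared-randomness value, exclude exactly one output $b^*$ from Bob's possible answers. Bob then decodes $a$ into the remaining $n_A$ outputs. The plan is to choose a distribution $\pi(b^*)$ such that each $x$ can route its mass: we need $q(b|x)\le 1-\pi(b)$ for all $b,x$. Choosing $\pi(b)=1-m_b-\epsilon_b$ with $\sum_b\pi(b)=1$ is possible exactly when $\sum_b(1-m_b)\ge1$, which is the constraint.

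It then remains to check that, conditioned on $b^*$, Alice's conditional distribution over the other outputs is realizable with marginals matching $q(\cdot|x)$. This is a flow or Hall-type feasibility argument, and I expect it to go through because each pattern excludes a single output.

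\emph{The case $n_X\ge n_A+1$ and $n_B\ge n_A+2$.} It suffices to construct a non-classical behavior for $n_X=n_A+1$ and $n_B=n_A+2$. Extra inputs can duplicate an existing input, and extra outputs can be given zero probability; both embeddings preserve classicality and the max constraint.

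In $(n_A+1;1,n_A+2)$ I would look for a behavior saturating $\sum_b m_b=n_A$ while violating a classical facet that is not of discrimination type. Generalizing the known $n_A=2$, $(3;1,4)$ example, I would take each input $x$ to spread uniformly over a subset of outputs, arranged so that the subsets form a symmetric "overlap" structure. The max-sum then equals $n_A$, but any classical deterministic strategy covers too few (input, output) pairs, so an ambiguous-guessing style inequality is violated. The explicit inequality would be taken from \cite{vieira_interplays_2023}, or verified by the LP.

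The main obstacle is making this construction and the tight classical bound work for general $n_A$, not just small cases. The fallback is to lift the $n_A=2$ example by adding perfectly transmitted extra inputs and outputs, each of which raises both $n_A$ and the bound by one.
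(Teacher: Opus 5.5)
Your ``only if'' direction is sound and is the paper's argument: the trivial regimes, plus Theorem~1 of Doolittle et al.\ combined with Cubitt's characterization, give $\mathrm{NS}^{\mathrm{A}}(n_X;1,n_A+1)_{n_A}=\mathrm{C}(n_X;1,n_A+1)_{n_A}$. Your self-contained sketch of that step also closes. With shared randomness $\lambda=b^*\sim\pi$ and Bob decoding bijectively onto $[n_A+1]\setminus\{b^*\}$, each $x$ needs a transport plan from $\pi$ to $q(\cdot|x)$ that never moves mass from $b^*$ to $b^*$. Every set of at least two outputs is reachable from every $b^*$, so Hall's condition reduces to $q(b|x)\le 1-\pi(b)$, which is exactly what your choice of $\pi$ guarantees.

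The gap is in the ``if'' direction. You never exhibit, for general $n_A$, a behavior in $\mathcal{G}(n_A+1;1,n_A+2)_{n_A}$ together with a proof that it is not classical; you flag this yourself. Moreover, the plan as stated cannot deliver such a behavior.

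\emph{The symmetric uniform plan fails for $n_A\ge3$.} If each input is spread uniformly over at least two outputs, every entry is at most $1/2$. Hence $\sum_b\max_x q(b|x)\le (n_A+2)/2<n_A$, and the max constraint can never be saturated. The natural symmetric candidate $q(b|x)=\tfrac12(\delta_{b,x}+\delta_{b,n_A+1})$ is in fact classical for $n_A\ge3$. Let shared randomness pick a uniformly random set $T$ of $n_A-1$ inputs and let Bob decode onto $T\cup\{n_A+1\}$. When $x\in T$, Alice sends the message for $x$ with probability $\frac{n_A+1}{2(n_A-1)}\le 1$; otherwise she sends the message for $n_A+1$. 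This yields $q(x|x)=q(n_A+1|x)=\tfrac12$.

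\emph{What works instead.} The paper's witness uses unequal weights: $q_\star(b|x)=\frac{n_A-1}{n_A}\delta_{b,x}+\frac{1}{n_A}\delta_{b,n_A+1}$. It saturates the max constraint and scores $2n_A+\frac{n_A-1}{n_A}$ on $\mathcal{F}_{n_A}=2\sum_{i=0}^{n_A}q(i|i)+\sum_{i=0}^{n_A}q(n_A+1|i)$. The classical maximum of $\mathcal{F}_{n_A}$ is $2n_A$: a deterministic decoder with image $S$, $|S|\le n_A$, scores $2|S\cap[n_A+1]|$, plus $n_A+1-|S\cap[n_A+1]|$ if $n_A+1\in S$.

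\emph{Your fallback lifting.} It can be made to work, but it needs an argument you did not give. Read as adding a unit-weight term to the inequality and raising its bound by one, it is wrong. For $\mathcal{F}_2\le 4$ in $(3;1,4)_2$, with three messages the old part alone reaches $6$ by ignoring the new input. So the new term $q(b'|x')$ needs weight at least $2$, consistent with the bound of $\mathcal{F}_{n_A}$ rising by two per step. The clean justification is structural. Append an input $x'$ and an output $b'$ with $\tilde q(b'|x')=1$ and zeros elsewhere in that row and column. The max-sum rises by exactly one, so membership in $\mathcal{G}$ is preserved. In any convex decomposition into deterministic strategies, every decoder must assign some message to $b'$, and the old inputs never send it. Restricting to the old inputs and outputs therefore gives an $n_A$-message classical model of the original behavior, so non-classicality lifts. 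Combined with an explicitly certified base case, e.g.\ $q(b|x)=\tfrac12(\delta_{b,x}+\delta_{b,3})$ in $(3;1,4)_2$, which scores $9/2>4$ on $\mathcal{F}_2$, this would complete the sufficiency direction.
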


We prove the two directions separately. 
First, we show that the above conditions are sufficient.
Here and throughout this section, we use $(n_X;n_Y,n_B)_{n_A}$ to denote the PM scenario $(n_X;n_Y,n_B)$ equipped with a classical message of alphabet size $n_A$, while the allowed assistance is specified separately in each case.

\begin{lemma}\label{lemma:Y1_sufficiency}
For $n_A\geq 2$, any PM scenario $(n_X;1,n_B)_{n_A}$ satisfying $n_X \geq n_A+1$ and $n_B \geq n_A+2$
admits an adaptive NS-assisted behavior that is not classical.
\end{lemma}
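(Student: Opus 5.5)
The plan is to reduce everything to a single minimal scenario and then lift. The work splits into three steps.

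\textbf{Step 1: a witness in the minimal scenario.} For each $n_A\geq 2$, I will construct an explicit behavior $\mathbf{q}^\star$ in the scenario $(n_A+1;1,n_A+2)_{n_A}$ such that
\begin{equation}
\sum_{b}\max_x q^\star(b|x)\leq n_A,
\end{equation}
so that $\mathbf{q}^\star\in\mathrm{NS}^{\mathrm{A}}(n_A+1;1,n_A+2)_{n_A}$ by \cref{thm:adaptive_NS_no_inputs}. The substance of the step is to show that $\mathbf{q}^\star\notin\mathrm{C}(n_A+1;1,n_A+2)_{n_A}$.

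I will use the fact that, when $n_Y=1$, the extreme points of the classical polytope are deterministic channels $x\mapsto f(x)$ whose image has at most $n_A$ elements. Shared randomness only takes convex mixtures of these, and Bob's decoding can be absorbed into $f$.

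\textbf{Step 2: certifying non-classicality (the main obstacle).} Support-pattern arguments alone are too weak here. With uniform rows on two-element supports, one can typically find a consistent mixture of image-$\leq n_A$ functions. I checked one such pattern for $(3;1,4)_2$ by hand, and it turned out to be classical.

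I will therefore certify non-classicality with a linear witness. I will take a facet-type functional
\begin{equation}
W(\mathbf{q})=\sum_{b,x}w_{b,x}\,q(b|x),
\end{equation}
of the kind studied in \cite{vieira_interplays_2023,Doolittle_2021_Certifying}. Its classical bound equals $\max_f\sum_x w_{f(x),x}$ over functions with $|f([n_X])|\leq n_A$, so it can be computed by a combinatorial enumeration of deterministic strategies.

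The first attempt will be the case $n_A=2$, $(3;1,4)$, where an LP over the finitely many deterministic points quickly finds both a violating behavior and the separating inequality. I will then generalize the pattern to arbitrary $n_A$: $n_A+1$ inputs, $n_A+2$ outputs, and the extra output used to spread weight so that the column maxima still sum to exactly $n_A$, while every image-$n_A$ function must ``waste'' some weight. The delicate part is choosing weights for which the classical maximum can be evaluated cleanly by a case analysis on which outputs are omitted from the image.

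\textbf{Step 3: lifting to larger scenarios.} For general $n_X\geq n_A+1$ and $n_B\geq n_A+2$, I will embed $\mathbf{q}^\star$. Extra inputs $x>n_A$ copy the row of $x=0$, and extra outputs $b>n_A+1$ receive probability zero. Both operations leave $\sum_b\max_x q(b|x)$ unchanged, so the lifted behavior remains in $\mathrm{NS}^{\mathrm{A}}$ by \cref{thm:adaptive_NS_no_inputs}.

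Conversely, restricting a classical model of the lifted behavior to the inputs $x\leq n_A$, and remapping any output $b>n_A+1$ to $0$ (such outputs occur with probability zero on all inputs), yields a classical model of $\mathbf{q}^\star$. This is a contradiction, so non-classicality is inherited and the lemma follows.
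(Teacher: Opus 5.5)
Your overall architecture is the same as the paper's. You use membership through the column-maximum criterion of \cref{thm:adaptive_NS_no_inputs}, a linear witness whose classical value is a maximum over deterministic channels with image of size at most $n_A$, and lifting by duplicating an input row and padding with zero-probability outputs. Your Step 3 is correct, including the remapping of the padded outputs. The gap is that Steps 1 and 2 are never carried out. You give neither $\mathbf{q}^\star$, nor the weights $w_{b,x}$, nor a proof of their classical bound for general $n_A$. An LP solution for $(3;1,4)_2$ settles one instance, and ``generalize the pattern'' is not an argument. Since non-classicality for every $n_A\geq 2$ is the whole content of the lemma, what you have is a plan rather than a proof.

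The missing ingredient is an explicit family. The paper uses $q_\star(b|x)=\frac{n_A-1}{n_A}\delta_{b,x}+\frac{1}{n_A}\delta_{b,n_A+1}$, whose column maxima sum to $(n_A+1)\frac{n_A-1}{n_A}+\frac{1}{n_A}=n_A$. It pairs this with the inequality $2\sum_{i=0}^{n_A}q(i|i)+\sum_{i=0}^{n_A}q(n_A+1|i)\leq 2n_A$ from \cite{vieira_interplays_2023}, which $q_\star$ violates by $\frac{n_A-1}{n_A}$.

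In your own framework the classical bound is a short case analysis. A deterministic $f$ with $|f([n_A+1])|\leq n_A$ earns $2$ per fixed point and $1$ per input sent to $n_A+1$. Let $k$ be the number of fixed points:
\begin{itemize}
\item $k=n_A+1$ is impossible, since the image would have $n_A+1$ elements;
\item $k=n_A$ forces the last input onto an already used output, giving score exactly $2n_A$;
\item $k=n_A-1$ gives at most $2(n_A-1)+2=2n_A$;
\item $k\leq n_A-2$ gives at most $k+n_A+1<2n_A$.
\end{itemize}

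There is an even more direct argument. Every $f$ appearing in a classical decomposition of $q_\star$ must satisfy $f(i)\in\{i,n_A+1\}$. The image constraint then allows at most $n_A-1$ fixed points, which forces $\sum_i q(i|i)\leq n_A-1<(n_A+1)\frac{n_A-1}{n_A}$. For $n_A=2$, this witness has uniform rows on two-element supports, which is exactly the class you set aside after a single hand check.
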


\begin{proof}
First consider the minimal scenario $(n_A+1;1,n_A+2)_{n_A}$. 
It was shown in~\cite{vieira_interplays_2023} that every classical behavior in this scenario satisfies the PM inequality
\begin{equation}\label{eq:interplays_facet_general}
    \mathcal{F}_{n_A}(\mathbf{q}) = 2\sum_{i=0}^{n_A} q(i|i) + \sum_{i=0}^{n_A} q(n_A+1|i) \leq 2n_A .
\end{equation}
Now consider the behavior
\begin{equation}\label{eq:adaptive_NS_family}
    q_\star(b|x) = \frac{n_A-1}{n_A}\delta_{b,x} + \frac{1}{n_A}\delta_{b,n_A+1}.
\end{equation}
This behavior belongs to the adaptive NS-assisted set. 
Indeed, using \cref{eq:max_constraint_no_y}, we have
\begin{equation}
    \sum_{b=0}^{n_A+1}\max_{x\in[n_A+1]} q_\star(b|x) = \sum_{b=0}^{n_A}\frac{n_A-1}{n_A} + \frac{1}{n_A} = n_A.
\end{equation}
Thus, $\mathbf{q}_\star\in\mathrm{NS}^{\mathrm{A}}(n_A+1;1,n_A+2)_{n_A}$.

On the other hand, evaluating \cref{eq:interplays_facet_general} on $\mathbf{q}_\star$ gives
\begin{equation}
    \mathcal{F}_{n_A}(\mathbf{q}_\star) = 2(n_A+1)\frac{n_A-1}{n_A} + (n_A+1)\frac{1}{n_A} = 2n_A+\frac{n_A-1}{n_A} > 2n_A,
\end{equation}
for every $n_A\geq 2$. 
Therefore, $\mathbf{q}_\star$ is not classical.

Finally, the same conclusion holds for every scenario with $n_X\geq n_A+1$ and $n_B\geq n_A+2$. 
Indeed, $\mathbf{q}_\star$ can be embedded into the larger scenario by assigning zero probability to the additional outputs and by repeating one of the original input distributions for the additional inputs. 
This embedding preserves membership in the adaptive NS-assisted set, since the value of the quantity in \cref{eq:max_constraint_no_y} does not increase. 
It also preserves non-classicality, because restricting the larger behavior to the original inputs and outputs recovers a behavior that violates \cref{eq:interplays_facet_general}. 
\end{proof}

We now turn to the converse direction of \cref{thm:Y1_classification}. 
Suppose that a PM scenario $(n_X;1,n_B)_{n_A}$ does not satisfy the conditions in \cref{lemma:Y1_sufficiency}. 
There are two immediate cases. 
If $n_A\geq n_X$, Alice can communicate her input directly to Bob, and the classical set already coincides with the set of all valid behaviors. 
If $n_A\geq n_B$, then, since Bob has no input, Alice can sample the desired output and communicate it to Bob; again, the classical set coincides with the full behavior set. 
In both cases, adaptive NS assistance cannot generate any post-classical behavior.

It remains to consider the only nontrivial family not covered by \cref{lemma:Y1_sufficiency}, namely
\begin{equation}
    (n_X;1,n_A+1)_{n_A},
    \qquad
    n_X\geq n_A+1.
\end{equation}
This case was settled in \cite{Doolittle_2021_Certifying}. 
Translated into our notation, their result reads as follows.

\begin{restatable}[Theorem~1 of~\cite{Doolittle_2021_Certifying}]{Theorem}{thmDoolittle}\label{thm:Doolittle_Y1}
For any PM scenario $(n_X;1,n_A+1)_{n_A}$ with $n_X\geq n_A+1$, a behavior $\mathbf{q}=[q(b|x)]$ is classical if and only if
\begin{equation}\label{eq:Doolittle_condition}
    \sum_{b=0}^{n_A}\max_{x\in[n_X]} q(b|x) \leq n_A.
\end{equation}
\end{restatable}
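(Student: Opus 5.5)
Necessity is quick. A classical behavior is in particular adaptive NS-assisted (shared randomness is a trivial NS box, and non-adaptive is contained in adaptive). So by the derivation of \cref{eq:max_constraint_no_y} (equivalently, \cref{thm:adaptive_NS_no_inputs}) it satisfies \cref{eq:Doolittle_condition}. One can also verify this directly. For each fixed $\lambda$, Bob's output is a deterministic function of $a$, so it takes at most $n_A$ values. Hence $\sum_b \max_x q(b|x)\le \sum_\lambda\pi(\lambda)\sum_b\max_x p(b|x\lambda)\le n_A$.

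For sufficiency, I would show that the polytope $\mathcal{P}$ of valid behaviors satisfying \cref{eq:Doolittle_condition} has only classical vertices. Since the classical set is convex, this gives $\mathcal{P}\subseteq \mathrm{C}(n_X;1,n_A+1)_{n_A}$. I would write $m_b:=\max_x q(b|x)$ and use the constraints in the equivalent linear form $\sum_b q(b|f(b))\le n_A$ for all $f:[n_A+1]\to[n_X]$.

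Instead of enumerating vertices, I would give a direct decomposition, following the structure of the claim. Write $m_b = \max_x q(b|x)$ and $S=\sum_b m_b\le n_A$. Normalization gives $S\ge 1$.

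The classical model I would build uses shared randomness $\lambda$ ranging over the $n_A+1$ ``excluded'' outputs $c$. In branch $c$, Bob can only produce outputs in $[n_A+1]\setminus\{c\}$, which has exactly $n_A$ elements. Alice therefore sends the relabelled desired output whenever it is not $c$, and in a way that reproduces the target. Concretely, I would look for weights $\pi(c)\ge 0$ summing to one and channels $p_c(b|x)$ supported on $b\neq c$ such that $\sum_c \pi(c)p_c(b|x)=q(b|x)$.

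A natural choice is $\pi(c)=1-m_c$ rescaled. The reasoning is that $\sum_c(1-m_c)=n_A+1-S\ge 1$, and in branch $c$ the mass $q(c|x)$ must be redistributed. Feasibility of this split is a transportation/flow problem for each $x$. Its constraints are $p_c(b|x)\ge 0$, $p_c(c|x)=0$, $\sum_c\pi(c)p_c(b|x)=q(b|x)$, and $\sum_b p_c(b|x)=1$. I would check feasibility with the Gale/Hall-type condition for bipartite transportation. The binding cut is expected to be exactly $\sum_b q(b|x)\le \ldots$ combined with $q(b|x)\le m_b$ and $S\le n_A$.

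The main obstacle is choosing the weights $\pi(c)$ independently of $x$ so that every $x$ is feasible simultaneously. This is where the max over $x$ (rather than per-$x$ values) enters. If the explicit construction proves messy, I would fall back on the vertex argument. A vertex of $\mathcal{P}$ saturates enough constraints that each $q(\cdot|x)$ is $0/1$-valued on all but a controlled set of outputs. I would then show that saturating $\sum_b m_b=n_A$ forces one output to be unused, or to be dominated, for each deterministic piece, which yields the classical model.
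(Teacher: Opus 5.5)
The paper does not prove this statement. It imports it from Doolittle et al.\ and uses it as a black box, together with \cref{thm:adaptive_NS_no_inputs}, to conclude that adaptive NS assistance adds nothing in $(n_X;1,n_A+1)_{n_A}$. Your proposal gives a self-contained, constructive proof, and it is correct: its main line closes after a one-line check that you did not carry out, so the ``main obstacle'' and the vertex fallback are unnecessary. Two points should be made explicit. (i) Your ansatz loses no generality. Every deterministic decoder $[n_A]\to[n_A+1]$ misses some output $c$, and any channel supported on $[n_A+1]\setminus\{c\}$ is realized by Alice sampling $b$ and sending a relabelling of it. Hence $\mathrm{C}(n_X;1,n_A+1)_{n_A}$ is exactly the convex hull of the channels vanishing on some output. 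The identity $q(b|x)=\sum_{c\neq b}\pi(c)p_c(b|x)\leq 1-\pi(b)$ then also gives necessity directly. (ii) In your transportation problem, every set $T$ of at least two outputs is adjacent to all supply nodes. Gale's condition therefore reduces to the singleton cuts $q(b|x)\leq 1-\pi(b)$, not a cut involving $\sum_b q(b|x)$ as you guessed. With $\pi(c)=(1-m_c)/(n_A+1-S)$ you get $\pi(b)\leq 1-m_b\leq 1-q(b|x)$ for every $x$ at once, precisely because $n_A+1-S\geq 1$ is equivalent to $S\leq n_A$. The max over $x$ enters only through $m_b$, so the $x$-independence of the weights is automatic. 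Compared with the citation, your route gives an explicit classical model using only $n_A+1$ values of shared randomness. It also shows why $n_B=n_A+1$ is special: each branch excludes a single output, so only singleton cuts matter. Your direct necessity check is fine once local randomness is absorbed into $\lambda$, so that the decoder is deterministic.
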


On the other hand, by \cref{res:NS_equals_G}, the same condition characterizes the adaptive NS-assisted set in this scenario. 
Therefore,
\begin{equation}
    \mathrm{NS}^{\mathrm{A}}(n_X;1,n_A+1)_{n_A}
    =
    \mathrm{C}(n_X;1,n_A+1)_{n_A}.
\end{equation}
Hence, adaptive NS assistance produces no post-classical behaviors in this remaining family. 
Together with \cref{lemma:Y1_sufficiency}, this proves \cref{thm:Y1_classification}.

Having classified the single-input case, we now explain how such advantages extend to PM scenarios in which Bob has several inputs. 
The construction is simple and shows that these advantages still originate from the corresponding $n_Y=1$ scenario.

Let $\mathbf{\tilde{q}}=[\tilde{q}(b|x)]$ be a post-classical behavior in $(n_X;1,n_B)_{n_A}$ that admits an adaptive NS-assisted model. 
For any $n_Y>1$, define a behavior in the larger scenario $(n_X;n_Y,n_B)_{n_A}$ by assigning the same single-input behavior to every fixed-$y$ component:
\begin{equation}
    \tilde{q}(b|xy) = \tilde{q}(b|x)
    \quad
    \forall y\in[n_Y].
\end{equation}

By \cref{global_local_adaptive}, the constructed behavior admits an adaptive NS-assisted model in $(n_X;n_Y,n_B)_{n_A}$. On the other hand, by \cref{res:NS_equals_C}, it does not admit a non-adaptive NS-assisted model in $(n_X; n_Y, n_B)_{n_A}$, as its sub-behaviors are non-classical\footnote{An alternative argument leads to the same conclusion using fewer than $n_{Y}$ copies of the post-classical behavior $\mathbf{\tilde{q}}$. In particular, it suffices to take a single copy—for instance, setting $q(b|x,y=0)=\tilde{q}(b|x)$ and assigning classical behaviors to the remaining components.}. Therefore, $q(b|xy)$ exhibits an adaptive advantage over non-adaptive NS assistance.

The construction above shows that any adaptive advantage present in a single-input scenario can be lifted to scenarios with arbitrary $n_Y$ by assigning suitable fixed-$y$ sub-behaviors. 
However, such an advantage is not genuinely tied to the presence of several inputs for Bob: it is inherited from the corresponding single-input component. 
The following corollary shows that this is not merely a feature of the construction, but a general property of NS assistance.

\begin{corollary}\label{cor:main}
A PM behavior $\mathbf{q} = [q(b|xy)]$ exhibits an advantage of adaptive over non-adaptive NS assistance if, and only if, there exists at least one input $y' \in [n_{Y}]$ such that the sub-behavior $\mathbf{q}_{y'} \equiv [q(b|x y')]_{b,x}$ also exhibits this advantage.

More formally, given $\mathbf{q} \in \mathrm{NS}^{\mathrm{A}}(n_{X}; n_{Y}, n_{B})_{n_{A}}$, then $\mathbf{q} \notin \mathrm{NS}^{\mathrm{NA}}(n_{X}; n_{Y}, n_{B})_{n_{A}}$ if, and only if, there exists an input $y'$ such that the sub-behavior satisfies $\mathbf{q}_{y'} \in \mathrm{NS}^{\mathrm{A}}(n_{X}; 1, n_{B})_{n_{A}}$ but $\mathbf{q}_{y'} \notin \mathrm{NS}^{\mathrm{NA}}(n_{X}; 1, n_{B})_{n_{A}}$.
\end{corollary}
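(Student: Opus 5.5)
The corollary should follow from combining the two reduction theorems, \cref{thm:Porto_thm} for the non-adaptive set and \cref{global_local_adaptive} for the adaptive set. Both say that membership in the full scenario $(n_X;n_Y,n_B)_{n_A}$ is equivalent to membership of every fixed-$y$ sub-behavior $\mathbf{q}_y$ in the corresponding single-input set. So the whole statement reduces to bookkeeping with quantifiers over $y$.

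First, fix $\mathbf{q}\in\mathrm{NS}^{\mathrm{A}}(n_X;n_Y,n_B)_{n_A}$. By the ``only if'' direction of \cref{global_local_adaptive}, every sub-behavior satisfies $\mathbf{q}_y\in\mathrm{NS}^{\mathrm{A}}(n_X;1,n_B)_{n_A}$. This direction is immediate, since Bob fixing his input to $y$ in an adaptive model for $\mathbf{q}$ gives an adaptive model for $\mathbf{q}_y$. Hence the condition ``$\mathbf{q}_{y'}\in\mathrm{NS}^{\mathrm{A}}(n_X;1,n_B)_{n_A}$'' in the corollary holds automatically for every $y'$. The right-hand side therefore simplifies to: there exists $y'$ with $\mathbf{q}_{y'}\notin\mathrm{NS}^{\mathrm{NA}}(n_X;1,n_B)_{n_A}$.

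Second, I would negate \cref{thm:Porto_thm}. It states that $\mathbf{q}\in\mathrm{NS}^{\mathrm{NA}}(n_X;n_Y,n_B)_{n_A}$ if and only if $\mathbf{q}_y\in\mathrm{NS}^{\mathrm{NA}}(n_X;1,n_B)_{n_A}$ for all $y$. Taking the contrapositive in both directions gives $\mathbf{q}\notin\mathrm{NS}^{\mathrm{NA}}(n_X;n_Y,n_B)_{n_A}$ if and only if some $y'$ has $\mathbf{q}_{y'}\notin\mathrm{NS}^{\mathrm{NA}}(n_X;1,n_B)_{n_A}$. Together with the first step, this is exactly the formal statement.

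Optionally, I would also rephrase the conclusion using \cref{remark_NSeqClassical_Yeq1} and \cref{res:NS_equals_G}: an adaptive advantage exists precisely when some $\mathbf{q}_{y'}$ satisfies \cref{eq:max_teo} but lies outside $\mathrm{C}(n_X;1,n_B)_{n_A}$. There is no real obstacle here. The only point needing care is that the ``only if'' direction of \cref{global_local_adaptive} is used to guarantee the adaptive membership of the sub-behavior, so that the advantage at $y'$ is genuine and not merely a non-membership.
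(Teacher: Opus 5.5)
Your proposal is correct and matches the paper's argument, which derives the corollary directly from \cref{global_local_adaptive} together with the contrapositive of \cref{thm:Porto_thm}. You also correctly note that only the easy direction of \cref{global_local_adaptive} is needed: it makes the condition $\mathbf{q}_{y'}\in\mathrm{NS}^{\mathrm{A}}(n_X;1,n_B)_{n_A}$ automatic once $\mathbf{q}\in\mathrm{NS}^{\mathrm{A}}(n_X;n_Y,n_B)_{n_A}$ is assumed.
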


The proof follows directly from \cref{global_local_adaptive} together with the contrapositive of \cref{thm:Porto_thm}.

\Cref{cor:main} shows that, for NS assistance, adaptive advantages cannot arise from a genuinely multi-input mechanism. 
They are always witnessed already at the level of at least one single-input sub-behavior. 
This is in sharp contrast with the entanglement-assisted case. 
In~\cite{adaptive2022pauwels}, the authors constructed a behavior in the scenario $(4;2,2)_3$ that admits an adaptive EA model but no non-adaptive EA model. 
Such an advantage cannot be traced back to the corresponding single-input scenario $(4;1,2)_3$: in that case, the classical set already coincides with the full set of valid behaviors, and hence no separation between adaptive and non-adaptive EA strategies is possible. 
The advantage in~\cite{adaptive2022pauwels} is therefore genuinely tied to the presence of multiple inputs for Bob.

At the level of whole PM scenarios, \cref{cor:main} implies that a scenario $(n_X;n_Y,n_B)_{n_A}$ exhibits an adaptive advantage for NS assistance if and only if the corresponding single-input scenario $(n_X;1,n_B)_{n_A}$ does. 
Combining this observation with \cref{thm:Y1_classification} gives the desired classification.

\begin{result}\label{res:classification}
For $n_A\geq 2$, a PM scenario $(n_X;n_Y,n_B)_{n_A}$ exhibits an adaptive advantage of NS assistance over its non-adaptive use if and only if
\begin{equation}
    n_X \geq n_A+1
    \quad\text{and}\quad
    n_B \geq n_A+2,
\end{equation}
for any $n_Y\geq 1$.
\end{result}

\cref{res:classification} completes the comparison between adaptive and non-adaptive NS assistance at the level of PM scenarios: a strict separation occurs exactly when $n_X\geq n_A+1$ and $n_B\geq n_A+2$. 

\section{Conclusion}\label{sec:conclusions}
In this work, we investigated non-signaling (NS) assistance in prepare-and-measure (PM) scenarios with classical communication, encompassing both adaptive and non-adaptive protocols.
We showed that, for arbitrary PM scenarios, the non-adaptive set of NS-assisted behaviors can be fully characterized in terms of the classical set in the corresponding scenario where Bob has no inputs (\cref{res:NS_equals_C}).
As a consequence, combining this characterization with the result of Frenkel and Weiner~\cite{frenkel-storage-2015}, we showed that non-adaptive NS assistance is already strong enough to reproduce quantum communication with the same message dimension: an $n_A$-dimensional quantum system can be simulated by an $n_A$-valued classical message assisted by arbitrary NS correlations (\cref{thm:quantum_subset_nonadaptive_NS,cor:qudit_simulation_NS}).
We then established an analogous characterization for the adaptive set: a behavior admits an adaptive NS-assisted model if and only if it satisfies a simple family of input-discrimination inequalities associated with the size of Alice's message (\cref{res:NS_equals_G}). 
These characterizations are obtained by reducing arbitrary PM scenarios to single-input subscenarios and by connecting the resulting problems to channel-simulation, or reverse-Shannon-type, questions; in particular, our adaptive characterization builds on the result of Ref.~\cite{Cubitt:2011TIT}, translated and combined with our reduction framework. 
The same input-discrimination constraints have also appeared in the literature on \textit{theory-independent models with informational restrictions}~\cite{tavakoli_informationally_2020, tavakoli_informationally_2022}; here, we identify them as the exact operational limitations imposed by adaptive NS assistance. 
Finally, combining our characterization with PM inequalities from Ref.~\cite{vieira_interplays_2023} and the single-input communication-constrained characterization of Ref.~\cite{Doolittle_2021_Certifying}, we determined all PM scenarios in which a strict advantage of adaptive NS assistance over non-adaptive protocols exists (\cref{res:classification}).

As for future lines of investigation, our results suggest some directions. 
One natural possibility is to use NS assistance as a tool to study semi-device-independent communication protocols from a theory-independent perspective. 
In Bell scenarios, the NS principle has played an important role in understanding the limitations of device-independent protocols, including cryptographic and randomness-generation tasks. 
Our framework provides an analogue of this approach in PM scenarios, where communication constraints replace the role of measurement locality. 
This may be useful, for instance, for analyzing the security or limitations of PM-based cryptographic protocols against adversaries constrained only by non-signaling principles.

Another possible direction is to use our characterizations as general simulation criteria for other PM resource models. 
Indeed, \cref{res:NS_equals_C} shows that any resource whose single-input behaviors are always classical with a message of size $n_A$ is automatically contained in the non-adaptive NS-assisted classical model with the same message size. 
Our quantum-communication simulation result is one instance of this principle, using the theorem of Frenkel and Weiner \cite{frenkel-storage-2015}. 
Similarly, \cref{res:NS_equals_G} shows that any resource whose behaviors satisfy the input-discrimination constraints of \cref{eq:max_teo} is simulated in the adaptive NS-assisted model. 
It would be interesting to investigate what other nontrivial simulation results, inclusions, or separations between PM resource models can be obtained from these two general facts.

\begin{acknowledgments}
We thank Marco Túlio Quintino for useful discussions.
JN is grateful to the University of Siegen for the welcoming environment in which the groundwork for this project was developed. This study was financed in part by the Coordenação de Aperfeiçoamento de Pessoal de Nível Superior - Brasil (CAPES) - Finance Code 001 - and by the Brazilian National Council for Scientific and Technological Development (CNPq) (INCT-IQ and Grant Numbers 316657/2023-9 and 200704/2025-7). This research was also supported by the São Paulo Research Foundation (Fapesp) under grant no.~2024/16657-3, 2025/01058-0 and 2025/27304-7. This work is partially carried out under IRA Programme, project no. FENG.02.01-IP.05-0006/23, financed by the FENG program 2021-2027, Priority FENG.02, Measure FENG.02.01., with the support of the FNP.  This work was supported by the Conselho Nacional de Desenvolvimento Científico e Tecnológico (CNPq), through a grant from the Conhecimento Brasil Program - Line 1.
In addition, this work was supported by the Deutsche Forschungsgemeinschaft 
(DFG, German Research Foundation, project number 563437167), the 
Sino-German Center for Research Promotion (Project M-0294),
and the German Federal Ministry of Research, Technology and Space (Project QuKuK, Grant No. 16KIS1618K and Project BeRyQC, Grant No. 13N17292).
This work was funded by QuantEdu France, a State aid managed by the French National Research Agency for France 2030 with the reference ANR-22-CMAS-0001.

\end{acknowledgments} 

\bibliography{1_citations}

%%%%%%%%%%%%%%%%%%%%%%%%%%%%%%%%%%%%%%%%%%%%%%%%%%%%%%%%%%%%%%%%%%%%%%%%%%%%%%%%%%%%%%%%%%
\appendix
%%%%%%%%%%%%%%%%%%%%%%%%%%%%%%%%%%%%%%%%%%%%%%%%%%%%%%%%%%%%%%%%%%%%%%%%%%%%%%%%%%%%%%%%%%

\setcounter{equation}{0}
\renewcommand{\theequation}{\thesection\arabic{equation}}
\renewcommand{\theHequation}{\thesection.\arabic{equation}}

\section{A proof for \cref{lemma:NA}}\label{appdx:proofPropo}
\LemmaNA*
\begin{proof}
    Since \cref{eq:NA_NS_model_2} is a particular case of \cref{eq:NA_NS_model_1}, to prove the equivalence between them we just have to prove that the latter implies the former.
    If a behavior $\mathbf{q}$ admits a decomposition as in \cref{eq:NA_NS_model_1}, define
    \begin{equation}
    \begin{aligned}
        p(a; b_0 \ldots b_{n_A - 1}\vert x; y) := & \sum_\beta p(a \beta \vert xy) \\
        & \times p(b_0\vert 0, \beta) \ldots p(b_{n_A - 1} \vert n_A-1, \beta).
    \end{aligned}
    \end{equation}
    With a bipartite box $p(a; b_0 \ldots b_{n_A - 1}\vert x; y)$ defined as above, it follows that \cref{eq:NA_NS_model_2} holds.

    Finally, to prove that we can always take $p(a; b_0 \ldots b_{n_A - 1}\vert x; y)$ to have uniform marginals on Alice, notice that defining
    \begin{equation}
    \begin{aligned}
        p'(a; b_0 \ldots &b_{n_A - 1} \vert x; y) = \frac{1}{n_A}(p(a; b_0 b_1\ldots b_{n_A - 1} \vert x; y) \\
        & + p(a \oplus 1; b_{n_A - 1} b_0 \ldots b_{n_A - 2} \vert x; y) + \ldots \\
        & \ldots +p(a \oplus (n_A - 1); b_{1} \ldots b_{n_A - 1} b_{0} \vert x; y)),
    \end{aligned}
    \end{equation}
    where $\oplus$ denotes sum modulo $n_A$, we have
    \begin{equation}
    \begin{aligned}
        \sum_{a, b_0\ldots b_{n_{A}-1}} &\delta_{b_a , b}\, p'(a; b_{0}\ldots b_{n_{A}-1}|x;y) =\\
        &\sum_{a, b_0\ldots b_{n_{A}-1}} \delta_{b_a , b}\, p(a; b_{0}\ldots b_{n_{A}-1}|x;y). 
    \end{aligned}
    \end{equation}
\end{proof}

\section{A proof of \cref{global_local_adaptive}}\label{appdx:proof_global_local_adaptive}
\thmadaptivelocalglobal*
\begin{proof}
Assume $\mathbf{q} \in \mathrm{NS}^{\mathrm{A}}(n_X ; n_Y , n_{B})_{n_A}$. Then there exists 
$\mathbf{p} = [p(a,b|x(y,z))]$ a non-signaling box in which Alice has $n_X$ inputs and $n_A$ outputs, while Bob has $n_Y \times n_A$ inputs and $n_B$ outputs, such that
\begin{equation}
q(b|xy) = \sum_a p(ab|x(y,z=a)).
\end{equation}
Fixing $y=i$, the corresponding sub-behavior becomes
\begin{equation}
q_{y}(b|x) = \sum_{a} p(ab|x(y=i,z=a)).
\end{equation}
Define $\mathbf{p}_i = [p(ab|x(i,z))]$. Since $\mathbf{p}$ is NS, each $\mathbf{p}_i$ is also NS and belongs to the Bell scenario $(n_X , n_A ; n_A , n_B)$. Consequently, $\mathbf{p}_i$ provides an adaptive NS-assisted model for $q_{y}(b|x)$ for every $i$.

For the converse direction, we first establish an intermediate lemma, stated below.

\begin{lemma}\label{lemma:uniform_adaptive}
If $\mathbf{q} = [q(b|xy)]$ admits an adaptive NS-assisted model, then there exists an equivalent adaptive NS-assisted model for which Alice's marginals are uniform.
\begin{proof}
Define
\begin{equation}
\tilde{p}(ab|x(yz)) = \frac{1}{n_A} \sum_{i=0}^{n_A-1} 
p(a \oplus_{n_A} i, b \mid x,(y,z \oplus_{n_A} i)),
\end{equation}
where $i \in \{0,\ldots,n_A-1\}$, $\oplus_{n_A}$ denotes addition modulo $n_A$, and $p(ab|x(yz))$ is the original Bell behavior generating $\mathbf{q}$ through
\begin{equation}
q(b|xy) = \sum_a p(ab|x(ya)).
\end{equation}
We show that $\tilde{p}$ indeed defines an alternative adaptive NS-assisted model, as it reproduces $\mathbf{q}$ and remains NS while, additionally, having uniform marginals for Alice.
First, the induced PM behavior is preserved:
\begin{equation}
\begin{aligned}
\sum_a \tilde{p}(ab|x(ya))
&= \sum_a \frac{1}{n_A} \sum_i 
   p(a \oplus_{n_A} i, b|x(y,a \oplus_{n_A} i)) \\
&= \frac{1}{n_A} \sum_i \sum_a 
   p(a \oplus_{n_A} i, b|x(y,a \oplus_{n_A} i)) \\
&= \frac{1}{n_A} \sum_i \sum_a p(ab|x(ya)) \\
&= \frac{1}{n_A} \sum_i q(b|xy)
= q(b|xy),
\end{aligned}
\end{equation}
where we used that, for fixed $i$, the relabeling $a \mapsto a \oplus_{n_A} i$ simply permutes the summation.

Second, Alice's marginal is uniform:
\begin{equation}
\begin{split}
\sum_b \tilde{p}(ab|x(yz))
&= \frac{1}{n_A} \sum_i \sum_b 
   p(a \oplus_{n_A} i, b|x(y,z \oplus_{n_A} i)) \\
&= \frac{1}{n_A} \sum_i p(a \oplus_{n_A} i|x) \\
&= \frac{1}{n_A} \sum_i p(i|x)
= \frac{1}{n_A},
\end{split}
\end{equation}
where we used the NS property of the original behavior and the normalization of $p(i|x)$; the uniform marginal for Alice is itself one of the two NS constraints. 

Finally, for the other NS constraint
\begin{equation}
\begin{split}
\sum_a \tilde{p}(ab|x(yz))
&= \frac{1}{n_A} \sum_i \sum_a 
   p(ab|x(y,z \oplus_{n_A} i)) \\
&= \frac{1}{n_A} \sum_i 
   p(b|y,z \oplus_{n_A} i),
\end{split}
\end{equation}
which depends only on Bob's local variables, as required. Hence, $\tilde{p}$ defines an adaptive NS-assisted model with uniform marginals for Alice.
\end{proof}
\end{lemma}

Let $\mathbf{q} = [q(b|xy)]$ be a PM behavior such that each of its sub-behaviors admits an adaptive NS-assisted model. According to \cref{lemma:uniform_adaptive}, for every $i$ we can write
\begin{equation}
q(b|xy) = q_i(b|x) = \sum_a p_i(ab|x,z=a),
\end{equation}
where $\mathbf{p}_i = [p_i(ab|xz)]$ is NS in the Bell scenario $(n_X , n_A ; n_A , n_B)$. Moreover, we may assume that Alice’s marginals are uniform, i.e.,
\begin{equation}
\sum_b p_i(ab|xz) = \frac{1}{n_A}.
\end{equation}

We now ask whether these Bell sub-behaviors $\mathbf{p}_i$ can be combined into a single behavior on the larger Bell scenario $(n_X , n_A ; n_Y n_A , n_B)$ that remains NS. If so, it would follow that the original PM behavior $\mathbf{q}$ admits an adaptive NS-assisted model. This is indeed possible by defining $p(ab|x(yz)) \equiv p_{i=y}(ab|xz)$. Then
\begin{equation}
\sum_a p(ab|x(yz)) = \sum_a p_{i=y}(ab|xz) = p_{i=y}(b|z),
\end{equation}
which is independent of $x$, and
\begin{equation}
\sum_b p(ab|x(yz)) = \sum_b p_{i=y}(ab|xz) = \frac{1}{n_A},
\end{equation}
which is independent of $y$. Hence, the glued behavior is NS and reproduces $\mathbf{q}$, proving that $\mathbf{q}$ admits an adaptive NS-assisted model.
\end{proof}

\section{The hyperplane representation of $\mathcal{G}(n_X;1,n_B)_{n_A}$}\label{appdx:hyperplane_cci}

The \cref{eq:f_inequalities} holds for any PM scenario $(n_X; n_Y, n_B)_{n_A}$, including those in which Bob has no inputs ($n_Y = 1$). We are particularly interested in this latter case, for which the inequalities reduce to the simpler form
\begin{equation}\label{simplified_CCI}
\sum_b q(b|f(b)) \le n_A .
\end{equation}
For a fixed PM scenario $(n_X; 1, n_B)_{n_A}$, the number of functions $f : [n_B] \to [n_X]$ is $n_X^{n_B}$, which grows exponentially with the scenario size. Nevertheless, a more careful analysis shows that the number of nontrivial inequalities of the form~\eqref{simplified_CCI} is smaller. Here, nontrivial means that the inequality is not already implied by positivity and normalization.

Consider any function $f$ whose image contains at most $n_A$ distinct inputs; equivalently, Bob's output alphabet $[n_B]$ is partitioned into at most $n_A$ subsets, each assigned to a single input of Alice. Any such $f$ generates a trivial inequality. Indeed, consider the extremal case in which $[n_B]$ is partitioned into exactly $n_A$ disjoint subsets $B_0, B_1, \ldots, B_{n_A-1}$ such that $B_i \cap B_j = \emptyset$, $\bigcup_i B_i = B$, $f(b) = x_i$ for all $b \in B_i$, and the inputs $x_i$ are pairwise distinct (not necessarily $x_i = i$). Then,
\begin{equation}
\begin{aligned}
\sum_{b=0}^{n_B-1} q(b|f(b))
&= \sum_{b \in B_0} q(b|x_0) + \cdots + \sum_{b \in B_{n_A-1}} q(b|x_{n_A-1}) \\
&\le \sum_{b=0}^{n_B-1} q(b|x_0) + \cdots + \sum_{b=0}^{n_B-1} q(b|x_{n_A-1}) \\
&= n_A,
\end{aligned}
\end{equation}
where the inequality follows from normalization of each conditional distribution. Therefore, the only functions $f$ that can yield nontrivial inequalities are those whose image contains at least $n_A + 1$ distinct inputs.

The discussion can be further simplified by taking into account the input–output symmetries of the PM scenario. In this case, it suffices to provide a single representative from each symmetry class, since all other inequalities within the same class can be obtained from it by appropriately applying these symmetries. Thus, the task reduces to determining when distinct functions generate inequalities that are inequivalent under such symmetry transformations.

Let $f_1$ be associated with a partition of $B$ into $K \ge n_A + 1$ subsets, namely $B_{0}^{1}, B_{1}^{1}, \ldots, B_{K-1}^{1}$, such that $B_{i}^{1} \cap B_{j}^{1} = \emptyset$, $\bigcup_i B_{i}^{1} = B$, and $f_1(B_{i}^{1}) = x_{i}^{1}$ with $x_{i}^{1} \ne x_{j}^{1}$ for $i \ne j$.\footnote{For the discussion to make sense, $K$ cannot exceed $n_{X}$; otherwise, there would be more subsets than available inputs to assign.} Similarly, let $f_2$ induce a partition of $B$ into $L \ge n_A + 1$ subsets $B_{0}^{2}, B_{1}^{2}, \ldots, B_{L-1}^{2}$ with analogous properties.

If $K \ne L$, the induced inequalities are not equivalent under input–output symmetries. Without loss of generality, assume $K > L$. Since input–output symmetries act only as permutations, they preserve the number of distinct inputs appearing in the inequality. The inequality induced by $f_1$ involves $K$ inputs (namely, $x_{0}^{1}, \ldots, x_{K-1}^{1}$), whereas the one induced by $f_2$ involves only $L < K$ inputs ($x_{0}^{2}, \ldots, x_{L-1}^{2}$). Consequently, no permutation can increase or decrease this number, and hence it is impossible to map one inequality onto the other.

We now assume $K = L$. Even in this case, the induced inequalities need not be equivalent under input–output symmetries. If the parts associated with $f_1$ (namely, $B_{i}^{1}$) have different cardinalities from those associated with $f_2$ (namely, $B_{j}^{2}$), then the corresponding inequalities are inequivalent\footnote{By different cardinalities we mean up to permutation of the parts, so that only the collection of block sizes matters. For instance, $(3,1,1)$ and $(1,3,1)$ are regarded as having the same cardinalities, whereas $(3,1,1)$ and $(2,2,1)$ are not.
}. To see this, note that although both partitions contain the same number of parts—so that $K$ distinct inputs appear in each inequality, namely $x_{0}^{1}, \ldots, x_{K-1}^{1}$ for $f_1$ and $x_{0}^{2}, \ldots, x_{K-1}^{2}$ for $f_2$—the multiplicities with which each input appears may differ. Specifically, $x_{i}^{1}$ appears $|B_{i}^{1}|$ times, whereas $x_{i}^{2}$ appears $|B_{i}^{2}|$ times. Since input symmetries act only by permutations, they preserve these multiplicities and therefore cannot transform one inequality into the other.

Now suppose $K=L$ and that, for every part $B_i^{1}$ of the partition associated with $f_1$, there exists a part $B_j^{2}$ of the partition associated with $f_2$ such that $|B_i^{1}| = |B_j^{2}|$. In this case, the resulting inequalities are different (as algebraic expressions, since $f_1 \neq f_2$), but they are equivalent under input–output symmetries. Indeed, each output $b\in B$ appears exactly once in both inequalities, and both involve the same number $K \ge n_A+1$ of distinct inputs. Moreover, the multiplicity with which each input appears is the same in both cases. Consequently, suitable permutations of inputs and outputs map one inequality onto the other, and vice versa, rendering them indistinguishable under the symmetry transformations of the scenario.

Therefore, all functions $f$ associated with a partition of $B$ into 
$K \ge n_A+1$ subsets $(B_0^f,\ldots,B_{K-1}^f)$ whose cardinalities, once arranged
in nonincreasing order,
$|B_0^f| \ge |B_1^f| \ge \cdots \ge |B_{K-1}^f|$,
coincide, generate inequalities belonging to the same symmetry class. 
Hence, each symmetry class is completely characterized by the number of parts $K$
together with the corresponding subset cardinalities. Consequently, for a PM
scenario $(n_X;1,n_B)_{n_A}$, the nontrivial inequalities invariant under
symmetries are obtained simply by enumerating the integer partitions of $n_B$
with $K \ge n_A+1$ and $K \le n_X$, since at most $n_X$ distinct inputs of Alice
can appear.

For instance, consider the PM scenario $(4;1,4)_{2}$. There are $4^{4}=256$
functions $f$ (and hence the same number of inequalities). However, once the
input–output symmetries of the PM scenario are taken into account, only two
inequivalent symmetry classes remain. Indeed, $n_B=4$ admits the five integer partitions
$(4)$, $(3,1)$, $(2,2)$, $(2,1,1)$, and $(1,1,1,1)$. Among these, only the last
two have at least $n_A+1=3$ parts (and at most $n_X=4$ parts), and therefore
lead to nontrivial inequalities. From each partition a representative inequality can be constructed. The
partition $(2,1,1)$ means that two outputs are mapped to the same input $x_0$,
while the remaining outputs are mapped to distinct inputs $x_1$ and $x_2$.
One possible representative is
\begin{equation}
q(0|0)+q(1|1)+q(2|2)+q(3|0)\le 2.
\end{equation}
For the partition $(1,1,1,1)$, each output is mapped to a distinct input, yielding,
for instance,
\begin{equation}\label{test_ineq}
q(0|0)+q(1|1)+q(2|2)+q(3|3)\le 2.
\end{equation}

\end{document}